\theoremstyle{plain}
\newtheorem{theorem}{Theorem}[section]
\newtheorem{lemma}[theorem]{Lemma}
\begin{document}

\title{An Approach for Finding Permutations Quickly: Fusion and Dimension
matching}
\authorinfo{Aravind Acharya}
{Department of Computer Science\\ and Automation \\ Indian Institute of Science
\\ Bangalore 560012, India}
{aravind@iisc.ac.in}
\authorinfo{Uday Bondhugula}
{Department of Computer Science\\ and Automation \\ Indian Institute of Science
\\ Bangalore 560012, India}
{udayb@iisc.ac.in}
\authorinfo{Albert Cohen}
{INRIA and DI\\Ecole Normale Superieure \\ 45 Rue d'Ulm
\\ Paris 75230, France}
{Albert.Cohen@inria.fr}

\maketitle
\begin{abstract}
Polyhedral compilers can perform complex loop optimizations that improve
parallelism and cache behaviour of loops in the input program. These
transformations result in significant performance gains on modern processors
which have large compute power and deep memory hierarchies. The paper,
\emph{Polyhedral Auto-transformation with No Integer Linear Programming},
identifies issues that adversely affect scalability of polyhedral
transformation frameworks; in particular the Pluto algorithm. The construction
and solving of a complex Integer Linear Programming (ILP) problem increases the
time taken by a polyhedral compiler significantly. The paper presents two
orthogonal ideas, which together overcome the scalability issues in the affine
scheduling problem. It first relaxes the ILP to a Linear Programming (LP)
problem, thereby solving a cheaper algorithm. To overcome the sub-optimalities
that arise due to this relaxation, the affine scheduling problem is decomposed
into following three components: (1) Fusion and dimension matching, (2) Loop
scaling and shifting, and (3) Loop skewing. This new auto-transformation
framework, pluto-lp-dfp, significantly improves the time taken by the Pluto
algorithm without sacrificing performance of the generated code. This report
first provides proofs for the theoretical claims made in the paper surrounding
relaxed LP formulation of the Pluto algorithm. The second part of the report
describes an approach to find good loop fusion (or distribution) and loop
permutations that enable tileability. This short report serves as the
supplementary material for the paper.
\end{abstract}

\section{Background}
\label{sec:background}
\newcommand{\card}[1]{\vert {#1} \vert}
In this section, we introduce terminology used in the report. We also provide 
background on the current ILP formulation used in Pluto to find good
transformations. This is included for the purpose of completeness.

\subsection{Affine Transformations}

A polyhedral compiler framework has a statement-centric view of the program.
Each statement in an iteration space is modeled with integer sets called
index sets or the domain of the statement. Let $\mathbf{S}$ be the set of 
all statements. Let $I_S$ denote the index set of a particular
statement $S$. For example, if $S$ has a two-dimensional index set, then:
\begin{equation}
    I_S = \{[i,j] \mid 0\leq i,j\leq N-1\}
\end{equation}
defines an index set where $i$ and $j$ are the original loop iterator variables
of the statement $S$ and $N$ is a program parameter. These index sets 
represent the set of statement instances that are executed by the program.  
An instance of $S$ is given by the iteration vector of $S$ referred to as 
$\vec{i}_S$.  Let $m_S$ denote the dimensionality of $I_S$, i.e.,  
$\vec{i_S}$ has $m_S$ components corresponding to loops surrounding the
statement $S$ from outermost to innermost.

Data dependences are precisely represented in a polyhedral 
auto-transformation framework using dependence polyhedra, which are a 
conjunction of constraints. These constraints can also be viewed as a 
relation between source and target iterations. These relations include 
affine combinations of loop iterator variables of the source and target 
iterations, program parameters and existentially quantified variables. If 
$D_e$ is the dependence polyhedron associated with an edge $e$ of the data 
dependence graph, then an iteration $\vec{t}$ of a statement $S_j$ is 
dependent on an iteration $\vec{s}$ of a
statement $S_i$ if and only if $\langle\vec{s},\vec{t}\rangle \in D_e$.


Formally, an affine transformation in the polyhedral model is a 
multi-dimensional affine function of
the loop iterators and program parameters. A one-dimensional affine
transformation $\phi_S$ for the statement $S$ (corresponding to a particular 
{\it level} or loop depth roughly speaking) can be expressed as:
\begin{align*}
    \phi_S(\vec{i_S}) = (c_1,c_2,\dots,c_{m_S}).(\vec{i_S}) + (d_1,\dots
    d_p).(\vec{p})+ c_0,\\
    c_0,c_1,\dots c_{m_S}, d_1,\dots d_p \in \mathbb{Z}.
\end{align*}
Each statement has its own set of $c_i$'s and $d_i$'s, and these are called
transformation coefficients corresponding to loop iterator variables and
program parameters (denoted by $\vec{p}$) respectively. The transformation
$\phi_S$ at a level $i$ for a statement $S$, can also be viewed as a hyperplane
, denoted by $\vec{h_i^S}$.  For simplicity, we drop the statement identifier
$S$ in places where the meaning is clear from context. 
The set of
consecutive hyperplanes that can be permuted form a \emph{permutable band}.
These hyperplanes (from outermost to innermost) form the rows of the
transformation matrix. The term {\it schedule} and {\it transformation} are
also used interchangeably, since a transformation specifies a new schedule. 

\subsection{ILP Formulation in Pluto}
Pluto is a polyhedral optimizer that finds affine transformations to maximize 
locality and parallelism. Given the index sets of the statements in the 
program and the dependences in the form of dependence polyhedra, the Pluto 
algorithm iteratively finds linearly independent hyperplanes based on an 
objective that minimizes dependence distances. This objective is modeled 
using an ILP, which we describe in the rest of this section.

The Pluto algorithm iteratively finds hyperplanes from outermost to 
innermost looking for tileable bands, i.e., the hyperplanes that satisfy the
tiling validity constraint below for every
dependence $\langle\vec{s},\vec{t}\rangle \in D_e$:
\begin{equation}
    \label{eqn:tile-validity}
    \phi_{S_j}(\vec{t})-\phi_{S_i}(\vec{s})\geq 0,
\end{equation}
where $S_i$ and $S_j$ represents the source and target statements of $D_e$ 
respectively.

The objective used by the Pluto algorithm is then to minimize the
dependence distances using a bounding function:
\begin{equation}
    \label{eqn:dep-bound}
    \phi_{S_j}(\vec{t})-\phi_{S_i}(\vec{s})\leq \vec{u}.\vec{p} + w.
\end{equation}
The intuition behind this upper bound on dependence distances is as follows: 
the dependence distances are bounded by loop iterator variables, which are 
further bounded by program parameters. Therefore, one can choose large 
enough values for $\vec{u}$ to obtain an upper bound. In order to minimize 
dependence distances, the Pluto algorithm minimizes coefficients of this 
upper bound by finding the \emph{lexicographic minimum (lexmin)} of ($\vec{u}$, $w$) 
as the objective:
\begin{equation}
    \label{eqn:lexmin}
    \text{lexmin}\left(\vec{u},w,\dots,c_i^S,d_i^S,\dots\right),
\end{equation}
where $c_i^S$ and $d_i^S$ are the transformation coefficients of $S$.

Note that well-known ILP solvers like GLPK~\cite{glpk}, Gurobi~\cite{gurobi} 
and CPLEX do not provide a {\it lexmin} function. However, lexmin can be 
implemented in practice as a weighted sum objective, in which the coefficients of
$\vec{u}$ (in the objective function) will be significantly higer than the coefficient of $w$ and so on.
%
\subsubsection{Avoiding the zero solution}
The tiling validity constraints and the dependence bounding constraints from 
(\ref{eqn:tile-validity})~and~(\ref{eqn:dep-bound}) have a trivial zero
vector solution. The Pluto algorithm restricts all transformation 
coefficients (of $\phi_S's)$  to non-negative integers. This restriction 
allows us to avoid the trivial solution for the coefficients of $\phi_S$ 
with the constraint:
\begin{equation}
    \label{eqn:trivial-soln}
    \sum\limits_{i=0}^{m_S} c_i\geq 1.
\end{equation}

\subsubsection{Linear Independence}
Affine transformations have to be one-to-one mappings in order for them to 
specify a complete schedule.  The Pluto algorithm thus enforces linear 
independence of hyperplanes, statement-wise. This is modeled by finding a 
basis for the null space of hyperplanes already found.  The next hyperplane 
to be found must have a component in this null space. The exact modelling of
this constraint is described in~\cite{uday16toplas}. It will be a constraint 
of the form:
\begin{equation}
    \label{eqn:lin-ind}
    \sum\limits_{i=0}^{m_s} a_i \times c_i \geq 1,
\end{equation}
where $a_i \in \mathbb{Z}$. These $a_i$'s are from the subspace that is orthogonal
to the subspace of currently found hyperplanes. 
For the rest of this paper, we refer to the above formulation as {\it 
pluto-ilp}.

We denote the set of statements in the program with $\mathbf{S}$, and 
$\card{\mathbf{S}}$ is its cardinality. Similarly, $\mathbf{C}$ denotes the 
set of connected components in the data dependence graph (DDG). We use
$\psi$ to denote affine constraints on transformation coefficients, loop
bounds, dependence distances, and program parameters.

\section{Proofs}
\label{append:proofs}

The constraints shown in Equation~\ref{eqn:lin-ind-real} model the full space
of non negative rational solutions.  Eventhough these constraints can not be
implemented in the solver, we use these constraints to prove certain
interesting results that exist when linear indpendence and constraints are
modelled precisely.
\begin{equation}
    \sum\limits_{i=0}^{m_S} c_i > 0, \qquad
    \label{eqn:lin-ind-real}
    \sum\limits_{i=0}^{m_S} a_i \times c_i > 0.
\end{equation}

In Lemma~\ref{thm:rational-solution} and Theorem~\ref{thm:scaling-appendix}, we
prove properties of solutions of the relaxed formulation that hold
irrespective of the way linear independence and trivial solution avoiding
constraints are modeled (either using
Equation~\ref{eqn:trivial-soln}, ~\ref{eqn:lin-ind} or \ref{eqn:lin-ind-real}).
\begin{lemma}
    \label{thm:rational-solution}
    The optimal solution to \textit{pluto-lp}, when it exists, is {\it
    rational}.
\end{lemma}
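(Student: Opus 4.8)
The plan is to reduce the statement to standard facts about linear programs over the rationals, proceeding in three steps.

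First, I would argue that \textit{pluto-lp} is a linear program all of whose defining data are rational (indeed integral). The tiling validity constraints~(\ref{eqn:tile-validity}) and the dependence bounding constraints~(\ref{eqn:dep-bound}) are, after the standard application of the affine form of Farkas' lemma, equivalent to the existence of non-negative Farkas multipliers satisfying a system of linear equalities whose coefficients are the integer coefficients appearing in the dependence polyhedra $D_e$. Projecting out these auxiliary multipliers (e.g.\ by Fourier--Motzkin elimination) keeps the system rational, since the projection of a rational polyhedron is a rational polyhedron. The trivial-solution-avoiding constraint~(\ref{eqn:trivial-soln}) (or the strict variant in~(\ref{eqn:lin-ind-real})) and the linear-independence constraint~(\ref{eqn:lin-ind}) likewise have integer coefficients. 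Hence the feasible region $P$ of \textit{pluto-lp} is a rational polyhedron, and the objective — the lexicographic minimisation of the vector $(\vec{u},w,\dots,c_i^S,d_i^S,\dots)$ of decision variables — is specified by rational data as well (realised as a weighted-sum objective, the weights may be taken rational).

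Second, I would show that the lexmin objective, whenever it is attained, selects a single point that is a \emph{vertex} of $P$. Minimising the first coordinate over $P$, when this minimum is finite, yields a face $F_1$ of $P$; minimising the next coordinate over $F_1$ yields a face $F_2$ of $F_1$, which is again a face of $P$; proceeding through all decision variables produces a nested chain of faces $F_1 \supseteq F_2 \supseteq \cdots \supseteq F_k$ of $P$. Since $F_k$ is obtained after lexicographically minimising \emph{every} coordinate, any two of its points agree in every coordinate, so $F_k$ is a single point, i.e.\ a $0$-dimensional face — a vertex — of $P$. The hypothesis ``when it exists'' guarantees that each successive minimisation is finite and attained, so none of the $F_i$ is empty. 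Third, I would invoke the classical fact that a vertex of a rational polyhedron is a rational point: a vertex is the unique solution of the subsystem of constraints tight at it, this subsystem has full column rank, and hence by Cramer's rule its coordinates are rational. Combining the three steps gives the claim.

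The main obstacle, and the part deserving the most care, is the first step: one must verify that after applying Farkas' lemma — which introduces products of Farkas multipliers with dependence-polyhedron coefficients — and after eliminating those multipliers, the surviving constraints on the transformation coefficients and on $(\vec{u},w)$ are genuinely linear with rational coefficients, so that $P$ really is a rational polyhedron. Once that is settled, the rest is routine polyhedral/LP theory; indeed an even shorter route is simply to cite the standard result that a linear program with rational data that attains its optimum attains it at a rational basic feasible solution, and to note that \textit{pluto-lp} has rational data and that lexmin is implemented as a rational weighted-sum objective.
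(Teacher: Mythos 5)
Your proposal is correct and follows essentially the same route as the paper, whose entire proof is the one-line observation that all coefficients in the LP formulation are integers, hence the optimal solution is rational. You have simply filled in the standard supporting details (rationality of the polyhedron after Farkas elimination, lexmin selecting a vertex, rationality of vertices of rational polyhedra) that the paper takes for granted.
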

The above lemma follows from the fact that all coefficients in the LP
formulation of Pluto are integers; thus the solutions of {\it pluto-lp} are
rational.  For the rest of this paper, we refer to the optimal rational
solution of pluto-lp as the solution of pluto-lp.

\begin{theorem}
    \label{thm:scaling-appendix}
    If $\vec{z}$ is a solution to the relaxed Pluto formulation ({\it
    pluto-lp}), then for any constant $k \geq 1$, $k\times\vec{z}$ is also a
    valid solution to {\it pluto-lp}.
\end{theorem}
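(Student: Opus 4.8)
The plan is to show that $k\vec{z}$ satisfies every constraint that defines the feasible region of \textit{pluto-lp}; being a ``valid solution'' here means being feasible, so the objective (the \textit{lexmin} of $(\vec{u},w,\dots)$) plays no role and we need only check constraint preservation. First I would make explicit what $\vec{z}$ is: the vector collecting all decision variables of \textit{pluto-lp} --- for every statement $S$ the coefficients $c_0,c_1,\dots,c_{m_S},d_1,\dots,d_p$, together with the bounding variables $\vec{u}$ and $w$, and any auxiliary (Farkas) multipliers introduced when the per-dependence constraints \ref{eqn:tile-validity} and \ref{eqn:dep-bound} are turned into linear constraints on the coefficients. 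Multiplying $\vec{z}$ by $k$ multiplies each of these components by $k$.

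Then I would go family by family. The tiling-validity expression $\phi_{S_j}(\vec{t})-\phi_{S_i}(\vec{s})$ is a linear form in the transformation coefficients (affine in the iterators and parameters, linear in the $c$'s and $d$'s), so replacing the coefficients by $k$ times themselves multiplies this expression by $k$; since it was $\geq 0$ and $k\geq 1>0$, it stays $\geq 0$. In \ref{eqn:dep-bound} both sides --- $\phi_{S_j}(\vec{t})-\phi_{S_i}(\vec{s})$ and $\vec{u}.\vec{p}+w$ --- are linear in the decision variables, hence both scale by $k$, and the inequality survives multiplication by the positive constant $k$. The same applies to the Farkas-multiplier forms of these two families and to the non-negativity restrictions on the coefficients (and on $\vec{u},w$): these are all homogeneous linear (in)equalities in $\vec{z}$ and are therefore invariant, as constraints, under any positive scaling.

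The hypothesis $k\geq 1$ (rather than merely $k>0$) is used only for the non-homogeneous constraints: the zero-solution-avoiding constraint $\sum_{i=0}^{m_S}c_i\geq 1$ of \ref{eqn:trivial-soln} and the linear-independence constraint $\sum_{i=0}^{m_S}a_i c_i\geq 1$ of \ref{eqn:lin-ind}. Here scaling gives $\sum_{i=0}^{m_S}(kc_i)=k\sum_{i=0}^{m_S}c_i\geq k\cdot 1=k\geq 1$ and similarly $\sum_{i=0}^{m_S}a_i(kc_i)=k\sum_{i=0}^{m_S}a_i c_i\geq k\geq 1$, so both still hold precisely because $k\geq 1$. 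If instead linear independence and trivial-solution avoidance are modeled by the strict form \ref{eqn:lin-ind-real}, $\sum_{i=0}^{m_S}c_i>0$ and $\sum_{i=0}^{m_S}a_i c_i>0$, these are homogeneous and preserved for any $k>0$; so the argument is uniform across all three modelings, as the remark preceding Lemma~\ref{thm:rational-solution} requires. Assembling the cases, $k\vec{z}$ satisfies all constraints of \textit{pluto-lp} and is thus a valid solution.

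I do not expect a genuine obstacle, only one point to handle with care: ensuring that the scaled vector still forms a consistent certificate for the Farkas-transformed versions of \ref{eqn:tile-validity} and \ref{eqn:dep-bound}, i.e.\ that the auxiliary multipliers are genuinely part of $\vec{z}$ and are scaled along with the $c$'s so the identities produced by the affine form of Farkas' lemma continue to hold. This can be sidestepped by arguing at the semantic level instead: the inequalities \ref{eqn:tile-validity} and \ref{eqn:dep-bound}, which hold for every $\langle\vec{s},\vec{t}\rangle\in D_e$, are preserved when $\phi_S$ is replaced by $k\phi_S$ and $(\vec{u},w)$ by $k(\vec{u},w)$, which is exactly what multiplication of $\vec{z}$ by $k$ does.
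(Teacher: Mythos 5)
Your proof is correct and follows essentially the same route as the paper's: verify, constraint family by constraint family, that multiplying all decision variables (including $\vec{u}$ and $w$, which are themselves variables of the LP) by $k$ preserves feasibility. You are in fact slightly more thorough than the paper, whose proof only explicitly checks the strict homogeneous form of the non-triviality and linear-independence constraints from~(\ref{eqn:lin-ind-real}), whereas you also cover the non-homogeneous $\geq 1$ versions in~(\ref{eqn:trivial-soln}) and~(\ref{eqn:lin-ind}) --- the one place where $k\geq 1$ rather than $k>0$ is genuinely needed --- and you flag the Farkas-multiplier bookkeeping that the paper leaves implicit.
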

\begin{proof}

    From Lemma~\ref{thm:rational-solution}, we know that, if a solution
    exists, then the optimal value of the objective corresponds to a rational
    solution of pluto-lp. Now, we need to prove that, scaling the solutions of
    pluto-lp will not violate the constraints. Consider the tiling validity
    constraints in (\ref{eqn:tile-validity}). $\phi_{S_i}$ and $\phi_{S_j}$ are
    one dimensional affine transformations. Therefore,

    \begin{align*}
        \phi_{S_j}(\vec{t})- \phi_{S_i}(\vec{s})\geq 0 \implies k\times
        \phi_{S_j}(\vec{t})- k\times \phi_{S_i}(\vec{s})\geq 0
    \end{align*}
where $k\geq 1$. Therefore any hyperplane found by the pluto-lp will not
violate the tiling validity constraints after scaling the solutions. The
dependence bounding constraints in~(\ref{eqn:dep-bound}) are bounded above by
$\vec{u}$ and $w$, which are variables in the pluto-lp formulation. The values
of $\vec{u}$ and $w$  are can also be scaled up without violating the
constraints.  That is,
    \begin{center}
        $k\times\phi_{S_j}(\vec{t})- k\times \phi_{S_i}(\vec{s})\geq k\times\vec{u}+k\times w$.
    \end{center}
    The trivial solution avoiding constraints and the linear independence
    avoiding constraints given in~(\ref{eqn:lin-ind-real}) can not be violated
    by scaling the solutions.  That is, if $c_i$'s are the solutions to
    pluto-lp and $k\geq 1$, then from (\ref{eqn:lin-ind-real}) it follows that
    for each statement $S$,
    \begin{equation*}
        \sum\limits_{i=0}^{m_S} k\times c_i > 0, \qquad 
        \sum\limits_{i=0}^{m_S} a_i \times k\times c_i > 0.
    \end{equation*}
    Therefore scaling the solutions of pluto-lp with a factor $k\geq 1$, will
    not violate the constraints.
\end{proof}

Theorems~\ref{thm:optimality}~and~\ref{thm:ratio-preservation} refer to the
constraints that hold only in cases where the linear independence and trivial
solution avoiding constraints model the full space of rational solutions as
given in Equation~\ref{eqn:lin-ind-real}.
\begin{theorem}
    \label{thm:optimality}
    The optimal solution to the relaxed Pluto algorithm ({\it pluto-lp}) can be
    scaled to an integral solution to {\it pluto-lp} such that the objective of
    the scaled (integral) solution will be equal to the objective of optimal
    solution of {\it pluto-ilp}.
\end{theorem}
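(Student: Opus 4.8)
The plan is to clear denominators in the rational \emph{pluto-lp} optimum, obtaining an integral point, and then to match its objective value against \emph{pluto-ilp} by a two-sided comparison. Let $\vec{z}$ be an optimal solution of \emph{pluto-lp}; by Lemma~\ref{thm:rational-solution} it is rational, so $\vec{z}=\alpha\,\vec{z}_0$ for a unique primitive integer vector $\vec{z}_0$ and a positive rational $\alpha$, and we set $k:=\alpha^{-1}$, so $\vec{z}_0=k\vec{z}$. The point $\vec{z}_0$ is again feasible: the tiling-validity and bounding constraints (\ref{eqn:tile-validity}),(\ref{eqn:dep-bound}) are homogeneous and so survive any positive scaling, while for every statement $S$ the sums $\sum_i c_i^S(\vec{z}_0)$ and $\sum_i a_i c_i^S(\vec{z}_0)$ are integers that are positive (being positive multiples of the corresponding strictly positive sums for $\vec{z}$), hence $\ge1$; thus $\vec{z}_0$ meets (\ref{eqn:trivial-soln}),(\ref{eqn:lin-ind}) and (\ref{eqn:lin-ind-real}), and is both \emph{pluto-lp}- and \emph{pluto-ilp}-feasible. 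Denote by $\mathrm{obj}(\vec{x})$ the lexmin objective tuple (\ref{eqn:lexmin}) evaluated at $\vec{x}$; every coordinate of it is a non-negative linear form, so $\mathrm{obj}$ is positively homogeneous and $\mathrm{obj}(\vec{z})=\alpha\,\mathrm{obj}(\vec{z}_0)$. Since $\vec{z}_0$ is \emph{pluto-lp}-feasible, optimality of $\vec{z}$ forces $\alpha\,\mathrm{obj}(\vec{z}_0)$ to be lexicographically no larger than $\mathrm{obj}(\vec{z}_0)$, hence $\alpha\le1$, i.e.\ $k\ge1$ (in accordance with Theorem~\ref{thm:scaling-appendix}), and $\mathrm{obj}(\vec{z}_0)=k\,\mathrm{obj}(\vec{z})$.

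Now let $\vec{z}^{*}$ be an optimal solution of \emph{pluto-ilp}, and write $\preceq$ for the lexicographic order on objective tuples. Being integral and meeting the ``$\ge1$'' constraints, $\vec{z}^{*}$ is \emph{pluto-lp}-feasible, so optimality of $\vec{z}$ gives $\mathrm{obj}(\vec{z})\preceq\mathrm{obj}(\vec{z}^{*})$; and since $\vec{z}_0=k\vec{z}$ is \emph{pluto-ilp}-feasible, optimality of $\vec{z}^{*}$ gives $\mathrm{obj}(\vec{z}^{*})\preceq\mathrm{obj}(\vec{z}_0)=k\,\mathrm{obj}(\vec{z})$. So the \emph{pluto-ilp} optimum is sandwiched, $\mathrm{obj}(\vec{z})\preceq\mathrm{obj}(\vec{z}^{*})\preceq k\,\mathrm{obj}(\vec{z})$, and it remains to upgrade the second inequality to an equality, i.e.\ to show $\vec{z}_0$ is itself \emph{pluto-ilp}-optimal.

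To do that I would exploit the homogeneity of the formulation away from the normalization constraints. After eliminating the source and target iterators by Farkas' lemma, the tiling-validity and bounding constraints become linear (in)equalities with zero right-hand side, so the feasible set of each formulation is a convex cone cut by the inhomogeneous normalization halfspaces, and the ``$\ge1$'' integral and ``$>0$'' rational (\ref{eqn:lin-ind-real}) variants of those halfspaces single out exactly the same strictly-positive directions of the cone. Two consequences: first, dividing any \emph{pluto-ilp}-feasible integer point by the greatest common divisor of its entries yields another \emph{pluto-ilp}-feasible integer point — each sum $\sum_i c_i^S$ and $\sum_i a_i c_i^S$ is a positive integer divisible by that gcd, hence stays $\ge1$ — so by optimality $\vec{z}^{*}$ may be taken primitive; second, the LP optimum $\vec{z}$, being lexicographically smallest, has a normalization constraint tight and is therefore minimal along its own ray, the ray through its primitive integer representative $\vec{z}_0$.

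What then remains is to identify these two rays, that is, to prove $\vec{z}_0=\vec{z}^{*}$, from which $\mathrm{obj}(\vec{z}_0)=\mathrm{obj}(\vec{z}^{*})$ and the theorem follows. This identification is the step I expect to be the main obstacle. It is straightforward when a single coordinate of the objective is forced strictly positive and the normalization constraints then determine a unique ray, but it requires care when several leading coordinates of the objective vanish — so the LP and ILP optima live on a higher-dimensional face of the cone — and when, because normalization is imposed per statement, the optima rest on a polyhedral ``floor'' rather than a single hyperplane; there one must still argue that the lexicographic minimization of the bounding objective selects the same ray in both formulations. Precisely here the hypothesis under which the theorem is stated — that linear independence and trivial-solution avoidance model the full rational cone of Equation~\ref{eqn:lin-ind-real} — is what makes the argument go through, since it keeps the feasible set closed under the positive rescalings used repeatedly above and lets one pass freely between the rational representative chosen by \emph{pluto-lp} and the primitive integral representative that is \emph{pluto-ilp}-optimal.
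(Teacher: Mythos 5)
Your setup — rationality of the LP optimum, clearing denominators to get an integral point $\vec{z}_0=k\vec{z}$ with $k\ge 1$, feasibility of $\vec{z}_0$ for both formulations, positive homogeneity of the objective, and the one-sided bound $\mathrm{obj}(\vec{z}^{*})\preceq\mathrm{obj}(\vec{z}_0)$ — matches the paper's proof. But you stop exactly where the paper's one remaining idea is needed, and you then head toward a much harder (and unnecessary) goal: identifying the rays, i.e.\ proving $\vec{z}_0=\vec{z}^{*}$. You explicitly flag that identification as ``the main obstacle'' and do not complete it, so as written the proof has a genuine gap. Note also that ray identification is really the content of the \emph{next} theorem in the paper (Theorem~\ref{thm:ratio-preservation}); for the present statement you only need equality of objective values, not of the solutions themselves.

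The missing step is the symmetric scaling in the other direction: scale the \emph{pluto-ilp} optimum \emph{down} by the same factor $k$. Under the rational modeling of Equation~\ref{eqn:lin-ind-real}, the feasible set is closed under multiplication by any positive scalar (the validity and bounding constraints are homogeneous, and the strict inequalities $\sum_i c_i>0$ and $\sum_i a_i c_i>0$ survive division by $k>0$), so $\vec{z}^{*}/k$ is \emph{pluto-lp}-feasible with objective $\mathrm{obj}(\vec{z}^{*})/k$. LP-optimality of $\vec{z}$ then gives $\mathrm{obj}(\vec{z})\preceq\mathrm{obj}(\vec{z}^{*})/k$, hence $\mathrm{obj}(\vec{z}_0)=k\,\mathrm{obj}(\vec{z})\preceq\mathrm{obj}(\vec{z}^{*})$ (multiplication by $k>0$ preserves the lexicographic order on non-negative tuples). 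Combined with your $\mathrm{obj}(\vec{z}^{*})\preceq\mathrm{obj}(\vec{z}_0)$ this closes the sandwich and yields $\mathrm{obj}(\vec{z}_0)=\mathrm{obj}(\vec{z}^{*})$ directly, with no appeal to primitivity of $\vec{z}^{*}$, uniqueness of the optimal face, or any of the cone-geometry considerations in your last paragraph. This is precisely the paper's argument: it introduces $z_r'=z_i/c_s$, observes $z_r\le z_r'$ by LP-optimality, and multiplies through by $c_s$.
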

\begin{proof}
    By Theorem~\ref{thm:scaling-appendix}, we know that after scaling the real
    solutions, the resulting solution does not violate any constraints.  Let
    $z_i$ and $z_r$ be the value of the optimal objective values for solutions
    to {\it pluto-ilp} and {\it pluto-lp} respectively.  Let $c_s$ be the
    smallest scaling factor that scales solutions of {\it pluto-lp} to
    integers. Let $z_i^\prime=c_s\times z_r$ be the value obtained from by
    scaling the optimal real solution of {\it pluto-lp} to an integral one.
    Note that $c_s \geq 1$; otherwise, the real solution would not be optimal.
    Now we prove that $z_i^\prime\leq z_i$.  Consider $z_r^\prime$ given by
    \begin{align*}
        &z_r^\prime =z_i/c_s \nonumber \\
        &\implies z_r\leq z_r^\prime \text{ ($\because z_r$ is the optimal solution to {\it pluto-lp})} \nonumber\\
        &\implies c_s\times z_r\leq c_s\times z_r^\prime \text{ ($\because c_s\geq 1$ and $z_r,z_r^\prime \geq 0$)} \nonumber\\
        &\implies z_i^\prime \leq z_i.
    \end{align*}
    This proves that the optimal (minimum) objective of {\it pluto-lp} after
    scaling to integer coefficients will be less than or equal to that of of
    {\it pluto-ilp}.  However, the objective of the relaxed formulation after
    scaling cannot be strictly less than that of {\it pluto-ilp} (otherwise,
    {\it pluto-ilp}'s solution, $z_i$ would not be an optimal one).  Therefore
    the optimal objective of {\it pluto-lp} after scaling up, is equal to the
    optimal objective of {\it pluto-ilp}.
\end{proof}
                                                                                                                                                    
\begin{theorem}
    \label{thm:ratio-preservation}
    Let $\vec{h_i}=(c_1,\dots, c_n)$ be the optimal solution for {\it
    pluto-ilp}.  Then, the optimal solution to {\it pluto-lp}, $\vec{h_r}$, is
    such that $\vec{h_r}=\vec{h_i}/c_s$ where $c_s\geq 1$.
\end{theorem}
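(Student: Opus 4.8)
The plan is to read this theorem as the solution-vector analogue of Theorem~\ref{thm:optimality}: that result already pins down the optimal \emph{objective values}, and here we only need to lift the statement from values to the \emph{points} attaining them. Let $z_i$ and $z_r$ be the optimal objective values of \textit{pluto-ilp} and \textit{pluto-lp}, and let $c_s \geq 1$ be the smallest factor that scales $\vec{h_r}$ to an integer vector, exactly the quantity introduced in the proof of Theorem~\ref{thm:optimality} (recall that $c_s \geq 1$ there, since a factor $c_s < 1$ would scale $\vec{h_r}$ to a \textit{pluto-lp}-feasible point of strictly smaller objective, contradicting optimality). First I would observe that, by Theorem~\ref{thm:scaling-appendix}, $c_s\vec{h_r}$ is feasible for \textit{pluto-lp}, and since its coordinates are integers it is in fact feasible for \textit{pluto-ilp}; by the computation inside Theorem~\ref{thm:optimality} its objective value equals $c_s z_r = z_i$. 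Hence $c_s\vec{h_r}$ is an \emph{optimal} solution of \textit{pluto-ilp}.

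The remaining step is to conclude $c_s\vec{h_r} = \vec{h_i}$, which immediately gives $\vec{h_r} = \vec{h_i}/c_s$ with $c_s \geq 1$ as claimed. For this I would invoke uniqueness of the Pluto optimum: the objective~(\ref{eqn:lexmin}) is a lexicographic minimum taken over \emph{all} decision variables (the components of $\vec{u}$, then $w$, then every $c_i^S$ and $d_i^S$), so the feasible region has a unique lexicographically smallest point, and that single point must be both $\vec{h_i}$ and $c_s\vec{h_r}$. If one prefers not to rely on the ellipsis in~(\ref{eqn:lexmin}) covering every coordinate, the same conclusion follows by also running the argument in reverse: $\vec{h_i}/c_s$ is obtained from the feasible integral vector $\vec{h_i}$ by multiplying through by $1/c_s > 0$, which preserves every constraint — the inequalities~(\ref{eqn:tile-validity}) and~(\ref{eqn:dep-bound}) because they are homogeneous in the coefficients (the constant term $c_0$, and the corresponding $\vec{u},w$, scale along with the rest), and the strict inequalities~(\ref{eqn:lin-ind-real}) because scaling by a positive rational keeps a positive sum positive — so $\vec{h_i}/c_s$ is \textit{pluto-lp}-feasible with objective $z_i/c_s = z_r$, hence \textit{pluto-lp}-optimal; pairing the two directions then forces the relation $\vec{h_r}=\vec{h_i}/c_s$.

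The main obstacle is precisely this last identification. Theorem~\ref{thm:optimality} only equates objective values, so without a uniqueness (or complete lexicographic tie-breaking) assumption one can assert only that \emph{some} \textit{pluto-lp} optimum equals $\vec{h_i}/c_s$ and \emph{some} \textit{pluto-ilp} optimum equals $c_s\vec{h_r}$, not that these coincide with the specific optima named in the statement. I would therefore state the lexmin-uniqueness hypothesis explicitly — it is already implicit in the way~(\ref{eqn:lexmin}) is written and in Pluto's implementation of \emph{lexmin} via a strict priority weighting — after which the theorem is immediate from the two paragraphs above. A minor secondary point to check carefully is that $c_s$ is well defined, i.e. that the rational optimum $\vec{h_r}$ guaranteed by Lemma~\ref{thm:rational-solution} has a common denominator so that a least integer multiplier exists; this is routine since $\vec{h_r}$ has finitely many rational coordinates.
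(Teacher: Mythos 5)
Your proof is correct and follows essentially the same route as the paper's: establish that $c_s\vec{h_r}$ is an integral feasible point whose objective equals $z_i$ (equivalently $z_r = z_i/c_s$), and then identify $c_s\vec{h_r}$ with $\vec{h_i}$ by appealing to the uniqueness of the lexmin optimum --- the paper invokes exactly this uniqueness in its final sentence. The only difference is cosmetic: you reuse Theorem~\ref{thm:optimality} to get $c_s z_r = z_i$ directly, whereas the paper re-derives it through a case analysis on $z_r$ versus $z_i/c_s$; your explicit flagging of the lexmin-uniqueness hypothesis and of the need for the strict-inequality form~(\ref{eqn:lin-ind-real}) when scaling \emph{down} $\vec{h_i}$ is, if anything, more careful than the original.
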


\begin{proof}
    Let $z_i$ and $z_r$ be the optimal values of the objective found by {\it
    pluto-ilp} and {\it pluto-lp} respectively. Let $c_s$ be the smallest
    scaling factor that scales every component of $\vec{h_r}$ to an integer.
    Note that $c_s \geq 1$ (otherwise, $\vec{h_r}$ would not have been optimal
    solution).  Let $z_i$ be the solution obtained by the hyperplane
    $\vec{h_i}$.  Let $z_r^\prime=z_i/c_s$.  Note that $z_r^\prime$ can be
    obtained by dividing all the components of $\vec{h_i}$ by $c_s$.  Now we
    have the following cases:
    \begin{itemize}
        \item \textbf{Case 1:} If $z_r=z_r^\prime$ then we have nothing to prove.
        \item \textbf{Case 2:} Consider the case $z_r<z_r^\prime$.  Let
            $\vec{h_i^\prime}=\vec{h_r}\times c_s$, and let $z_i^\prime$ be the
            objective value with ${h_i^\prime}$.  $z_i^\prime=z_r\times c_s$
            (due to the nature of (\ref{eqn:dep-bound})).  Since the optimal
            objective value for {\it pluto-ilp} was found to be $z_i$,
            $z_i^\prime \geq z_i$.  Now if we scale down each component of
            $\vec{h_i}$ by $c_s$, we get a solution that has an objective value
            lower than $z_r$.  This is a contradiction.
    \end{itemize}
    Therefore, $z_r = z_r^\prime$ in all cases, and $z_i' = z_i$.  Since both
    $\vec{h_i}$ and $\vec{h_i'}$ have the same optimal objective value and
    given that the lexmin provides a unique optimal solution, $\vec{h_i'} =
    \vec{h_i}$, and $\vec{h_r} = \vec{h_i}/c_s$.
\end{proof}

The properties of the solutions of {\it pluto-lp} that hold even when
linear independece constraints and trivial solution avoiding constraints model
the space of rational solutions imprecisely are stated in
Theorems~\ref{thm:parallel}~and~\ref{thm:tileability}.
\begin{theorem}
    \label{thm:parallel}
    The relaxed formulation, {\it pluto-lp} (in each permutable band), finds a
    outer parallel hyperplane if and only if {\it pluto-ilp} finds a outer parallel
    hyperplane.
\end{theorem}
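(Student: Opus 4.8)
The plan is to prove both implications by comparing feasible points of the two formulations, relying only on two facts: every integer point feasible for \emph{pluto-ilp} is feasible for \emph{pluto-lp} (since \emph{pluto-lp} merely drops integrality, and an integer solving $\sum_i c_i \geq 1$, $\sum_i a_i c_i \geq 1$ also satisfies the strict rational forms in (\ref{eqn:lin-ind-real})), and the scaling property of Theorem~\ref{thm:scaling-appendix}. Deliberately, the argument will \emph{not} use Theorem~\ref{thm:ratio-preservation}, which is why the equivalence survives an imprecise modeling of linear independence. First I would pin down the meaning of ``finding an outer parallel hyperplane'': when the algorithm seeks the outermost hyperplane $\vec h$ of a permutable band, $\vec h$ is parallel precisely when the dependence distance $\phi_{S_j}(\vec t) - \phi_{S_i}(\vec s)$ is zero for every as-yet-unsatisfied dependence $\langle \vec s, \vec t\rangle \in D_e$. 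Since program parameters and the bound variables $\vec u, w$ are non-negative by construction (a negative bound is anyway infeasible in the presence of any dependence, by (\ref{eqn:tile-validity})), and since the leading block of the lexmin objective (\ref{eqn:lexmin}) is exactly $\vec u$ followed by $w$, this is equivalent to the computed optimum having $\vec u = \vec 0$ and $w = 0$. Both formulations share the constraints (\ref{eqn:tile-validity}), (\ref{eqn:dep-bound}), the trivial-solution and linear-independence constraints (in whatever form), and the previously fixed outer hyperplanes; they differ only in integrality.

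For the direction \emph{pluto-ilp} parallel $\implies$ \emph{pluto-lp} parallel, I would take the optimal integral solution $\vec h_i$ of \emph{pluto-ilp}, which by hypothesis admits $\vec u = \vec 0$, $w = 0$. Then $(\vec h_i, \vec 0, 0)$ satisfies every constraint of \emph{pluto-lp}: tiling validity and the dependence bound hold with $\vec u = \vec 0, w = 0$ because they held in \emph{pluto-ilp}, and the trivial-solution and linear-independence constraints hold because an integer satisfying the $\geq 1$ forms also satisfies the $> 0$ forms. Hence $(\vec 0, 0)$, the lexicographically least attainable value of the leading block $(\vec u, w)$, is feasible for \emph{pluto-lp}, so the \emph{pluto-lp} optimum must also have $\vec u = \vec 0$, $w = 0$, i.e.\ \emph{pluto-lp} finds an outer parallel hyperplane.

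For the converse, I would take the optimal rational solution $\vec h_r$ of \emph{pluto-lp} (rational by Lemma~\ref{thm:rational-solution}) attaining $\vec u = \vec 0$, $w = 0$, and let $c_s \geq 1$ be the smallest factor making $c_s \vec h_r$ integral. By Theorem~\ref{thm:scaling-appendix}, $\vec h_i' := c_s \vec h_r$ remains feasible for \emph{pluto-lp}; being integral, its coefficient sums $\sum_i c_i$ and $\sum_i a_i c_i$ are positive integers, hence at least $1$, so $\vec h_i'$ is feasible for \emph{pluto-ilp}. Scaling all transformation coefficients (and the bound variables $\vec u, w$) by $c_s$ scales every dependence distance and the bound $\vec u.\vec p + w$ by $c_s$ (the ``nature of (\ref{eqn:dep-bound})'' used in Theorem~\ref{thm:ratio-preservation}), so $\vec h_i'$ still has $\vec u = \vec 0$, $w = 0$. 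Therefore the \emph{pluto-ilp} optimum also attains $\vec u = \vec 0$, $w = 0$, and \emph{pluto-ilp} finds an outer parallel hyperplane, completing the equivalence.

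The step I expect to require the most care is justifying that the two searches are compared ``at the same point'': the linear-independence coefficients $a_i$ depend on the hyperplanes already fixed in outer bands, so strictly speaking one proceeds by induction over bands, assuming the outer context is common to both runs before comparing the next band's outermost hyperplane; I would state this explicitly rather than gloss over it. A secondary detail worth spelling out is the standing non-negativity of $\vec u$ and $w$, which is exactly what makes $(\vec 0, 0)$ the attainable lexicographic floor invoked in both directions.
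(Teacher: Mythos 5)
Your proposal is correct and follows essentially the same route as the paper's own proof: the forward direction uses the fact that the integral \emph{pluto-ilp} solution with $\vec{u}=\vec{0}$, $w=0$ remains feasible for \emph{pluto-lp} (so the lexmin floor $(\vec{0},0)$ is attained), and the converse scales the fractional solution to an integral one via Theorem~\ref{thm:scaling-appendix}, observing that scaling leaves $\vec{u}=\vec{0}$ and $w=0$ unchanged. Your write-up is in fact more careful than the paper's (explicit non-negativity of $\vec{u},w$ and the band-by-band induction), but the underlying argument is the same.
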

\begin{proof}
    There exists a parallel hyperplane if and only if $\vec{u}=\vec{0}$ and
    $w=0$ in the ILP formulation of Pluto. Note that $\vec{u}+w$ gives
    an upper bound on the dependence distance and therefore $\vec{u}+w$
    is the smallest value of $\vec{u} + w$. The objective of the relaxed LP is
    to minimize the values of $\vec{u}$ and $w$, and there exists an integer
    solution which is also present in the real space. Therefore the
    solution found by {\it pluto-lp} will fall into one of the two following
    cases.
    \begin{enumerate}
        \item The solution found by {\it pluto-lp} is same as the solution found
            by
            {\it pluto-ilp}. In this case, there is nothing to prove.  \item
        \label{proof:converse}{\it pluto-lp} finds a fractional solution
            with $\vec{u} = \vec{0}$ and
            $w=0$. In this case, by Theorem~\ref{thm:scaling-appendix} one can scale
            the
            real (fractional) solution to an integral one without violating
            any constraints.
            This scaling up will neither change the value of $\vec{u}$ nor $w$
            because they were found to be equal to zero.
        \end{enumerate}
        The ``only if'' part of the proof follows from Case~\ref{proof:converse}
        in the above argument.
\end{proof}

\begin{theorem}
\label{thm:tileability}
Given a loop nest of dimensionality $m$, if {\it pluto-ilp} finds $d \leq m$
permutable hyperplanes, then {\it pluto-lp}
also finds $d$ permutable hyperplanes.
\end{theorem}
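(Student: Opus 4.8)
The plan is to argue by induction on the number of hyperplanes \textit{pluto-lp} has already collected inside the current band, using two observations: (i) the LP solved at each step of \textit{pluto-lp} is obtained from the corresponding ILP of \textit{pluto-ilp} merely by dropping integrality, so an integral feasible point of the latter is also feasible for the former; and (ii) a dimension count guarantees that, as long as fewer than $d$ hyperplanes have been gathered, some hyperplane produced by \textit{pluto-ilp} is still available as such a feasible point. Fix the permutable hyperplanes $\vec{h_1},\dots,\vec{h_d}$ returned by \textit{pluto-ilp}: they are integral, non-trivial, pairwise linearly independent statement-wise, and each satisfies the tiling-validity constraint~(\ref{eqn:tile-validity}) for every dependence of the band. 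The inductive hypothesis at step $j$ (for $1 \le j \le d$), which holds vacuously for $j=1$, is that \textit{pluto-lp} has already returned $j-1$ linearly independent hyperplanes $\vec{g_1},\dots,\vec{g_{j-1}}$, each satisfying~(\ref{eqn:tile-validity}) for every dependence of the band.

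For the inductive step, let $W$ be the per-statement subspace spanned by $\vec{g_1},\dots,\vec{g_{j-1}}$, so that $\dim W = j-1 < d$. Since $\vec{h_1},\dots,\vec{h_d}$ span a $d$-dimensional space, they cannot all lie in $W$, hence some $\vec{h_i}$ has a non-zero component in $W^{\perp}$. I claim $\vec{h_i}$ is feasible for the LP that \textit{pluto-lp} solves at step $j$: it obeys~(\ref{eqn:tile-validity}) since it belongs to \textit{pluto-ilp}'s band; it obeys the trivial-solution-avoiding constraint~(\ref{eqn:trivial-soln}) since it is a non-zero, non-negative, integral vector; the dependence-bounding constraints~(\ref{eqn:dep-bound}) are satisfied by taking $\vec{u}$ and $w$ large enough, as these are free variables of the formulation; and, being linearly independent of $\vec{g_1},\dots,\vec{g_{j-1}}$, it meets the linear-independence constraint of step $j$ under whichever encoding~(\ref{eqn:lin-ind}) or~(\ref{eqn:lin-ind-real}) is in force, since Pluto's encoding of that constraint is by design satisfiable together with~(\ref{eqn:tile-validity}) and~(\ref{eqn:trivial-soln}) whenever some valid hyperplane independent of the ones already found exists, and $\vec{h_i}$ witnesses that one does. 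Consequently the step-$j$ LP is feasible and, its objective being bounded below, \textit{pluto-lp} returns a hyperplane $\vec{g_j}$; by its own linear-independence constraint $\vec{g_j}$ is independent of $\vec{g_1},\dots,\vec{g_{j-1}}$, and since it satisfies~(\ref{eqn:tile-validity}) it extends the band, restoring the inductive hypothesis at step $j+1$.

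Running the induction from $j=1$ to $j=d$ produces $d$ hyperplanes $\vec{g_1},\dots,\vec{g_d}$ of \textit{pluto-lp}, each satisfying~(\ref{eqn:tile-validity}) for every dependence of the band, hence forming a permutable band of size $d$; moreover, by Theorem~\ref{thm:scaling-appendix} each $\vec{g_j}$ can be rescaled to an integral permutable hyperplane, though the statement does not require this. The main obstacle is the inductive step: because \textit{pluto-lp}'s objective steers it to a different (typically fractional and scaled-down) optimum at each step, its first $j-1$ hyperplanes need not coincide with \textit{pluto-ilp}'s, so one cannot merely transplant \textit{pluto-ilp}'s $j$-th hyperplane; the dimension count --- that some \textit{pluto-ilp} hyperplane must remain independent of whatever \textit{pluto-lp} has accumulated --- is what repairs this gap, and the delicate point inside it is checking that the chosen candidate actually satisfies the concrete linear-independence encoding being used.
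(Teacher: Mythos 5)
Your argument is sound for the direction it addresses, but it proves only half of the claim, and it does so by a genuinely different route than the paper. The paper argues by contradiction on the count $k$ of hyperplanes found by \textit{pluto-lp}: if $k>d$, the $k$ rational hyperplanes can be scaled to integers (Theorem~\ref{thm:scaling-appendix}) while remaining valid and linearly independent, contradicting the maximality of $d$ for \textit{pluto-ilp}; if $k<d$, the $d$ integral hyperplanes of \textit{pluto-ilp} are already rational solutions, so \textit{pluto-lp} cannot stop short. Your constructive induction replaces the second case with something strictly stronger and, frankly, more honest about the algorithm's greedy nature: the paper's $k<d$ case silently assumes that an iterative procedure which commits to possibly different hyperplanes at earlier steps will still be able to reach $d$ linearly independent ones, whereas your dimension count ($\dim W = j-1 < d$ forces some $\vec{h_i}$ outside $W$) exhibits an explicit feasible point for the step-$j$ LP no matter what $\vec{g_1},\dots,\vec{g_{j-1}}$ turned out to be. That is the real content of the theorem and you supply it with more care than the paper does. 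The one delicate point you correctly flag --- that the chosen $\vec{h_i}$ must satisfy the concrete linear-independence encoding~(\ref{eqn:lin-ind}), not just abstract independence --- is glossed over by the paper as well, which is why it separately introduces the idealized constraints~(\ref{eqn:lin-ind-real}); you are no worse off than the original there.

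The genuine omission is the upper bound: your induction shows \textit{pluto-lp} finds \emph{at least} $d$ hyperplanes, but the theorem asserts it finds exactly $d$, so you must also rule out a $(d+1)$-st one. This closes immediately with the tool you already invoke at the end: if \textit{pluto-lp} returned $d+1$ linearly independent valid hyperplanes, scaling each by Theorem~\ref{thm:scaling-appendix} would yield $d+1$ linearly independent integral hyperplanes satisfying~(\ref{eqn:tile-validity}), contradicting the assumption that \textit{pluto-ilp} finds only $d$. Add that sentence and your proof covers the full statement.
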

\begin{proof}
Let us assume that {\it pluto-lp} finds $k$ hyperplanes and let $k \neq d$. We
prove
Theorem~\ref{thm:tileability} by contradiction. Let us assume that
$k>d$. The $k$ linearly independent hyperplanes found by {\it pluto-lp} can be
scaled to integers. These scaled solutions will continue to be linearly
independent as scaling transformations will not affect linear independence.
Therefore these correspond to $k$ linearly independent in the integer space.
This means that there existed $k$ linearly independent solutions in the integer
space. Since the validity constraints remain the same at each level, there
exits only $d$ linearly independent solutions as found by {\it pluto-ilp}.
This is a contradiction to the assumption that $k>d$.

Suppose $k<d$, then we know that there are $d$ linearly independent solutions
to the tiling validity constraints in the integer space. These are valid linearly
independent solutions in the rational space. Therefore, it is a contradiction
to our assumption $k<d$. Therefore, {\it pluto-lp} will find $d$ linearly
independent solutions to the tiling validity constraints.
\end{proof}

\subsection{Proofs corresponding to routines in {\it pluto-lp-dfp}} In this
section we state and prove theorems that establish the correctness of routines
in {\it pluto-lp-dfp} framework.  Algorithm~1 refers to the scaling MIP,
\textsc{Scale}, presented in Section 4; Algorithm~2 refers to the scaling and
shifting routine, \textsc{ScaleAndShiftPermutations} presented in Section~5 and
Algorithm~3 refers to the skewing routine, \textsc{IntroduceSkew}, presented in
Section~6 of the PLDI paper describing the framework of {\it pluto-lp-dfp}.

\begin{theorem}
    \label{thm:skew-disable}
    If loop skewing and shifting transformations are disabled, the relaxed Pluto algorithm will
    find the transformation coefficients that are scaled down versions of the
    transformation coefficients of the {\it pluto-ilp}. The values of $\vec{u}$ and
    $w$ in {\it pluto-lp} will be scaled down by the same scaling factor as
    the transformation coefficients.
\end{theorem}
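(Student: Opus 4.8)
The plan is to reduce the statement to the scaling and ratio-preservation results already established (Theorems~\ref{thm:scaling-appendix} and~\ref{thm:ratio-preservation}), by observing that disabling skewing and shifting restricts the feasible region of {\it pluto-lp} only through \emph{homogeneous} constraints, so the conic structure those proofs exploit is retained.

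First I would make precise what ``disabling skewing and shifting'' does to the two formulations. Disabling shifting forces the constant coefficient $c_0=0$ in every $\phi_S$; disabling skewing forces each newly found hyperplane of a statement $S$ to have a nonzero coefficient along at most one of the loop iterators lying in the orthogonal subspace recorded by the linear independence constraint (equivalently, the hyperplane must be a non-negative multiple of a single canonical direction per statement). Both are systems of homogeneous (in)equalities in the transformation coefficients --- equalities $c_0=0$ and $c_i=0$ together with the sign constraints already present. Adjoining them to {\it pluto-lp} therefore leaves the feasible set closed under scaling by any $k\geq 1$, so the proof of Theorem~\ref{thm:scaling-appendix} goes through unchanged for the restricted problem: if $\vec z$ is feasible then so is $k\vec z$, and by~(\ref{eqn:dep-bound}) the objective scales by $k$. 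The same restrictions are imposed identically on {\it pluto-ilp}, so the two restricted problems differ only in the integrality requirement.

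Next I would rerun the argument of Theorem~\ref{thm:ratio-preservation} on the restricted problems. Let $\vec h_i=(c_1,\dots,c_n)$ be the optimal (lexmin) solution of restricted {\it pluto-ilp} with objective value $z_i$, and let $\vec h_r$ be the optimal rational solution of restricted {\it pluto-lp} (rational by Lemma~\ref{thm:rational-solution}) with objective value $z_r$. Let $c_s\geq 1$ be the smallest factor scaling every component of $\vec h_r$ to an integer; $c_s\geq 1$ because otherwise $\vec h_r$ would not be optimal. Since the objective is homogeneous of degree one in the solution, scaling $\vec h_r$ up by $c_s$ gives a feasible integral solution of objective $c_s z_r$, whence $c_s z_r\geq z_i$; scaling $\vec h_i$ down by $c_s$ gives a feasible rational solution of objective $z_i/c_s$, whence $z_r\leq z_i/c_s$. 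Combining, $c_s z_r=z_i$, so $c_s\vec h_r$ and $\vec h_i$ are both optimal integral solutions of the restricted ILP; by uniqueness of lexmin, $c_s\vec h_r=\vec h_i$, i.e. $\vec h_r=\vec h_i/c_s$. (If the band admits an outer parallel hyperplane, Theorem~\ref{thm:parallel} gives $\vec u=\vec 0$, $w=0$ on both sides and the claimed scaling of $\vec u,w$ is vacuous.)

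Finally, for the bounding coefficients: $\vec u$ and $w$ occur only in~(\ref{eqn:dep-bound}), whose left-hand side is exactly $\phi_{S_j}(\vec t)-\phi_{S_i}(\vec s)$ and hence scales by $1/c_s$ when $\vec h_i\mapsto\vec h_i/c_s$. Because the lexmin objective minimizes $\vec u$ first and $w$ next, the optimal right-hand side over a fixed transformation scales with that transformation, so the optimal $(\vec u_r,w_r)$ of {\it pluto-lp} equals $(\vec u_i,w_i)/c_s$ --- the same factor $c_s$ as for the transformation coefficients. The step I expect to be the crux is the claim in the third paragraph that the \emph{combinatorial} part of the solution --- which loop dimension each hyperplane selects, and the induced fusion and dimension-matching structure --- is forced to agree between {\it pluto-lp} and {\it pluto-ilp}. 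This is precisely where disabling skewing matters: it removes the continuous degrees of freedom that would otherwise let the LP settle on a fractional optimum not proportional to any integral one, and it is what lets the uniqueness-of-lexmin argument close the remaining gap.
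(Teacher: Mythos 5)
Your proposal has a genuine gap: it rests on re-running the argument of Theorem~\ref{thm:ratio-preservation} (scale $\vec{h_i}$ down by $c_s$ and conclude feasibility), but that argument is valid only when the trivial-solution and linear-independence constraints are modeled \emph{precisely} as the strict inequalities of Equation~\ref{eqn:lin-ind-real}, so that the feasible set is (essentially) a cone. Theorem~\ref{thm:skew-disable} lives in the other regime: the paper's proof explicitly works with the \emph{imprecise} modeling, where with skewing and shifting disabled the constraint $\sum_i c_i \geq 1$ collapses to a lower bound $c_i \geq 1$ on the single non-zero coefficient of each statement. That constraint is not homogeneous, so your claim that ``disabling skewing and shifting restricts the feasible region only through homogeneous constraints'' addresses the new equalities but not the pre-existing inhomogeneous lower bounds; the feasible region is closed under scaling up (Theorem~\ref{thm:scaling-appendix}) but not under scaling down. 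Consequently the step ``scaling $\vec{h_i}$ down by $c_s$ gives a feasible rational solution of objective $z_i/c_s$'' is unjustified: $\vec{h_i}/c_s$ may violate $c_i \geq 1$, and without it you cannot derive $z_r \leq z_i/c_s$ and close the sandwich.

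The paper's proof closes exactly this hole by exploiting the special structure of the skew-disabled problem rather than the generic cone argument. With one non-zero coefficient per statement, each bounded below by $1$, it normalizes the ILP solution so that the smallest coefficient becomes exactly $1$; the normalizing factor $c_s'$ is chosen relative to $\vec{h_i}$ (not relative to $\vec{h_r}$ as in your write-up), which guarantees every normalized coefficient still satisfies its lower bound and hence remains LP-feasible. It then argues that the LP optimum actually sits at this normalized point because the lexmin objective drives every coefficient to its lower bound (obtained by eliminating variables from the validity and bounding constraints), from which $z_r = z_i/c_s'$ and the common scaling of $\vec{u}$ and $w$ follow. Your closing paragraph correctly senses that something beyond the generic argument is needed when skewing is disabled, but the missing ingredient is not the agreement of the combinatorial structure --- it is the feasibility of the scaled-down point under the $c_i \geq 1$ bounds, which only the statement-wise normalization provides.
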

\begin{proof}
    When loop skewing and shifting transformations are disabled, then only one of the
    transformation coefficients is non-zero. Without loss of
    generality, let us assume that $c_i$ is the non zero
    coefficient in each statement. The real space of non-zero solutions is modeled
    imprecisely. Therefore we can assume that the lower bound of $c_i$
    of every statement to be $1$. Let $\vec{u}.\vec{p}+w = z_i$ and
    $\vec{u}.\vec{p}+w = z_r$ for {\it pluto-ilp} and {\it pluto-lp} respectively.
    Let us normalize the values of $c_i$ to $1$. Let $c_s^\prime$ be
    the normalizing factor. The normalized coefficients,
    $c_i/{c_s^\prime}$ for each statement S, will be in the space of
    real solutions. This is because the lower bound of each $c_i$
    is $1$. The value of other $c_i$'s will also correspond to the
    lower bounds that are obtained Gaussian elimination and Fourier
    Motzkin elimination of the variables from validity and dependence
    bounding constraints. These correspond to the lowest possible
    values each of these $c_i$'s can take. Therefore the value of
    $z_i^\prime$ is equal to that of $z_r$. Hence the values of the
    variables in the ILP, including the objective, will be scaled down by the same scaling
    factor in the relaxed LP formulation.
\end{proof}

\begin{theorem}
Given a valid transformation $T$ for a program, the output transformation that
is obtained by Algorithm~3 does not violate any dependences.
\end{theorem}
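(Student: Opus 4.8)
The plan is to reduce the claim to the single operation that Algorithm~3 (\textsc{IntroduceSkew}) performs and to show that this operation maps valid transformations to valid transformations. Recall that $T$ being valid means that, for every dependence edge $e$ and every pair $\langle\vec{s},\vec{t}\rangle \in D_e$ with $\vec{s}$ an instance of $S_i$ and $\vec{t}$ an instance of $S_j$, the image of $\vec{t}$ is lexicographically greater than the image of $\vec{s}$: writing $\Delta_l := \phi_l^{S_j}(\vec{t}) - \phi_l^{S_i}(\vec{s})$ for the difference at level $l$, there is a first level $\ell$ with $\Delta_\ell > 0$ and $\Delta_l = 0$ for all $l < \ell$. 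I would first record that \textsc{IntroduceSkew} alters the transformation only through skewing steps, each of which replaces the hyperplane $\vec{h}_k^S$ at some level $k$ by $\vec{h}_k^S + \sum_{l<k} \mu_l\, \vec{h}_l^S$, where every $\mu_l$ is a non-negative integer and the summation runs only over levels $l$ strictly outer to $k$ within the same permutable band (the algorithm uses a common set of multipliers across the statements of a connected component, which does not affect the argument).

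Next I would show that one such skewing step preserves validity. Fix a dependence pair and let $\ell$ be its first-differing level under $T$; let $\Delta'_l$ denote the corresponding differences after skewing level $k$. If $k \le \ell$ then every $l < k$ satisfies $l < \ell$, so $\Delta_l = 0$, and therefore
\begin{equation*}
\Delta'_k \;=\; \Delta_k + \sum_{l<k}\mu_l\,\Delta_l \;=\; \Delta_k,
\end{equation*}
which is $0$ when $k < \ell$ and strictly positive when $k = \ell$. If $k > \ell$ the step changes no level $\le \ell$ at all. In both cases every level below $\ell$ still carries difference $0$ and level $\ell$ still carries a strictly positive difference, so the pair is still correctly ordered; since the pair was arbitrary, the skewed transformation is valid. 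Composing the finitely many skewing steps performed by \textsc{IntroduceSkew} and inducting on their number then yields that the output transformation $T'$ violates no dependence. I would also note in passing that each skewing step is a unimodular row operation on the transformation matrix and hence preserves statement-wise linear independence, so $T'$ remains a complete schedule, though this is not needed for legality.

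The step I expect to be the real obstacle is not the case analysis above but extracting its two structural hypotheses from the pseudocode of Algorithm~3: that the skew multipliers it computes are always non-negative, and that it only skews a hyperplane against hyperplanes that lie strictly outer to it inside the current permutable band. Verifying these two properties of \textsc{IntroduceSkew} directly from its construction is the part that requires care; once they are in hand, the legality of the output follows from the computation $\Delta'_k = \Delta_k$ displayed above.
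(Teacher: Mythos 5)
Your proof is correct, but it follows a genuinely different route from the paper's. The paper argues by cases on whether a skew was introduced, and for the skewing case it leans on two facts specific to \textsc{IntroduceSkew}: that the skew at level $i$ is built only from outer-level hyperplanes whose dependence components are non-negative, and that the new hyperplane is itself a solution of the \emph{pluto-lp} validity constraints, so dependences previously satisfied at level $i$ remain satisfied there. You instead prove a purely structural lemma: an elementary row operation that adds multiples of outer hyperplanes to $\vec{h}_k$ cannot change the dependence-difference vector at any level below the first strictly positive level $\ell$, and leaves $\Delta_k$ itself unchanged when $k\le\ell$ because all strictly outer differences vanish there. This is sharper and more self-contained --- it does not need the skewed hyperplane to satisfy \emph{pluto-lp} at all, which the paper's proof does invoke. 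One remark worth making explicit: the computation $\Delta'_k=\Delta_k+\sum_{l<k}\mu_l\Delta_l$ shows that the sign of the multipliers $\mu_l$ is irrelevant to legality (the terms you add are either multiplied by zero, when $k\le\ell$, or occur at a level that no longer matters, when $k>\ell$), so the first of the two ``structural hypotheses'' you flag as the real obstacle --- non-negativity of the skew multipliers --- is not actually needed for this theorem; it matters only for the separate goal of making the resulting band permutable/tileable. The second hypothesis, that skewing only ever uses strictly outer levels of the same band, is indeed the one property of Algorithm~3 your argument genuinely depends on, and it is asserted (though not proved) in the paper's own proof. What the paper's approach buys in exchange for its informality is that it simultaneously gestures at why the skewed band becomes fully permutable, which your legality-only argument deliberately does not address.
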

\begin{proof}
The proof can be split into two cases: (1) If the algorithm did not introduce a
skew, then the we return the input transformation itself. Since the input
transformation did not violate any dependences, the returned transformation is
valid. (2) If the Algorithm~3 introduced a skew, then
for each level $i$, it only uses the transformation coefficients from the outer
levels. Note that the algorithm proceeds level by level. Let $i$ be the
dimension at which are introducing a skew. All other hyperplanes which are outer
to $i$ can be permuted to the outer level because, $i$ is the first dimension
that has a negative component for some dependence. The dimensions that are used
to skew will have coefficients from the outer levels and none of these levels
have a negative component. Since the newly introduced skew satisfies pluto-lp,
it does not violate any dependences. All dependences that were previously
satisfied by level $i$ will still continue to be satisfied at level $i$ after
skewing. Hence all
the dependences will be satisfied by the transformation obtained
from Algorithm~3.
\end{proof}

\begin{theorem}
    Given a valid transformation $T$ for a program,
Algorithm~3 does not introduce any skewing
transformations in cases where $T$ was tileable.
\end{theorem}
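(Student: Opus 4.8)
The plan is to show that the sole condition under which Algorithm~3 (\textsc{IntroduceSkew}) modifies its input — the appearance of a level at which some not-yet-carried dependence has a negative $\phi$-difference — simply cannot occur when $T$ is tileable, so the algorithm falls into the branch where it returns $T$ unchanged.

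First I would make precise what it means for $T$ to be tileable: each permutable band of $T$ satisfies the tiling validity constraint of Equation~\ref{eqn:tile-validity}, i.e., for every hyperplane $\vec{h}$ in the band and every dependence $\langle \vec{s},\vec{t}\rangle \in D_e$ that is not already carried (satisfied) by an outer band, we have $\phi_{S_j}(\vec{t}) - \phi_{S_i}(\vec{s}) \geq 0$. Equivalently, no level inside any band of $T$ exhibits a negative component for any live dependence. Next I would recall the control flow of Algorithm~3: it processes the hyperplanes of each band from outermost to innermost and, as noted in the proof of the preceding theorem, it decides to skew at a level $i$ only when $i$ is the first dimension carrying a negative component for some dependence; otherwise it leaves the hyperplane at level $i$ untouched. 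Hence the guard that triggers any skewing is precisely the occurrence of a negative $\phi$-difference at some level.

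Combining these two observations gives the result directly: since $T$ is tileable, the first step shows no level of any band has a negative component for a live dependence, so the guard of Algorithm~3 is never satisfied; the algorithm therefore never enters the branch that constructs a skewed hyperplane and, exactly as in case (1) of the proof of the previous theorem, returns the input transformation $T$ itself. Consequently no skewing transformation is introduced when $T$ was tileable.

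The main obstacle I anticipate is pinning down the exact predicate Algorithm~3 uses to detect that skewing is required and verifying that it coincides with ``some live dependence has a negative component at this level.'' One must be careful to exclude dependences already carried by outer bands (which impose no constraint inside inner bands), to handle the multi-band case band by band, and to confirm that any shifting performed by the algorithm is either a no-op in the tileable case or simply irrelevant to the skewing claim. Once the equivalence ``tileable $\iff$ the skew-guard fails at every level of every band'' is established rigorously, the argument collapses to the two-line deduction above.
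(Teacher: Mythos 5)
Your proposal is correct and follows essentially the same route as the paper: both arguments rest on the single observation that Algorithm~3's skewing guard fires only when some level exhibits a negative dependence component, which is exactly the condition that tileability of $T$ rules out (the paper states this as the contrapositive, you state it directly). Your version is somewhat more careful about live versus already-carried dependences and the band-by-band structure, but the core idea is identical.
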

\begin{proof}
The algorithm tries to introduce a skew only when there is a negative component
for one of the dependences at a level $d$. This means that the level $d$ can
not be permuted to the outermost level. Hence the input transformation $T$
would result in a loop nest which can not be tiled. Therefore,
Algorithm~3 would not be introducing skewing, if the
original loop nest was not tileable.
\end{proof}

\begin{theorem}
Given a program $P$, the transformation hyperplanes obtained by {\it pluto-lp-dfp}
are linearly independent and do not violate any dependences.
\end{theorem}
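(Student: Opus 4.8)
The plan is to establish the two assertions — linear independence and dependence satisfaction — separately, and in each case to argue by composition over the stages of the \textit{pluto-lp-dfp} pipeline: (i) the relaxed LP that produces the fractional hyperplanes, together with fusion and dimension matching; (ii) the scaling MIP \textsc{Scale} (Algorithm~1); (iii) \textsc{ScaleAndShiftPermutations} (Algorithm~2); and (iv) \textsc{IntroduceSkew} (Algorithm~3). The key structural observation is that every stage transforms the current set of hyperplanes only by operations of a restricted form — per-statement scaling by a factor $\ge 1$, constant shifts, or skews that add outer rows to an inner row — so it suffices to check that each such operation preserves both properties, and then chain the implications.

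For dependence satisfaction: the hyperplanes returned by \textit{pluto-lp} satisfy the tiling validity constraint~(\ref{eqn:tile-validity}) for every dependence, by construction of the LP, and fusion/dimension matching do not change this since they only decide band membership and relabel levels. By Theorem~\ref{thm:scaling-appendix}, scaling every statement's hyperplane by a factor $k\ge 1$ keeps~(\ref{eqn:tile-validity}) satisfied, which covers Algorithm~1, whose per-statement scaling factors are $\ge 1$. For Algorithm~2 I would argue that the constant shifts it selects are exactly those solving the dependence-validity inequalities restricted to the current permutable band, so that adding them leaves $\phi_{S_j}(\vec{t})-\phi_{S_i}(\vec{s})\ge 0$ intact for every dependence. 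Finally, by the theorem proved just above for Algorithm~3, its skewing does not violate any dependence. Composing these four facts yields that the final hyperplanes satisfy all dependences.

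For linear independence: the relaxed LP enforces the linear-independence constraint~(\ref{eqn:lin-ind}) (equivalently~(\ref{eqn:lin-ind-real})) statement-wise, so the fractional hyperplanes it returns are linearly independent, and dimension matching aligns already-independent per-statement hyperplanes into common bands without altering any individual $\vec{h_i^S}$. Scaling a family of vectors by nonzero factors preserves linear independence — this is precisely the argument used inside the proof of Theorem~\ref{thm:tileability} — which covers Algorithm~1. A constant shift changes only the $c_0$ coordinate and so cannot destroy independence of the linear parts, covering Algorithm~2. Skewing at level $i$ replaces row $i$ by $\vec{h_i} + \sum_{j<i}\lambda_j \vec{h_j}$ using strictly outer rows only (as established in the proof of the preceding theorem); this is a unimodular elementary row operation and hence preserves linear independence, covering Algorithm~3. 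Chaining again gives linear independence of the final hyperplanes.

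The main obstacle I expect is the shifting step in Algorithm~2: unlike scaling (handled by Theorem~\ref{thm:scaling-appendix}) and skewing (handled by the previous theorem), a constant shift is not validity-preserving in general, so the argument cannot appeal to a generic property of shifts but must use the specific construction of \textsc{ScaleAndShiftPermutations} — namely that the shift values are obtained as solutions of the dependence-validity system over the band. A secondary subtlety is making precise that fusion and dimension matching never merge or reorder hyperplanes in a way that breaks statement-wise linear independence; this relies on the fact that they only group existing, independent per-statement hyperplanes into shared bands and pad dimensions, leaving each $\vec{h_i^S}$ unchanged.
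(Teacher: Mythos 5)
Your proposal is correct and follows essentially the same route as the paper: decompose the pipeline stage by stage, appeal to the validity of Algorithms~2 and~3 for dependence preservation, and argue that scaling, shifting, and skewing each preserve linear independence, with the skew handled as an addition of strictly outer rows. The only minor difference is that the paper attributes the initial linear independence to the permutation found by fusion and dimension matching (a permutation trivially yields independent rows) rather than to the LP's linear-independence constraints, but this does not change the substance of the argument.
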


\begin{proof}
The correctness claim of {\it pluto-lp-dfp} follows from correctness of
Algorithm~2~and~3.
Both algorithms find valid affine transforms and these transformations can be
composed together without violating any depenedences. In order the prove that
the found affine transformation hyperplanes are linearly independent, we prove
that each step in {\it pluto-lp-dfp} preserve linear independence of
hyperplanes. In the first step, linear independence of hyperplanes is first guaranteed by the
initial
permutation. Then scaling and shifting transformations introduced by
Algorithm~2 does not affect linear
independence. The skew introduced by Algorithm~3 will not affect linear
independence as well
because a skew introduced at each level will have a new component which
does not exist either in outer or inner levels. Therefore, the transformations hyperplanes
obtained from {\it pluto-lp-dfp} will be linearly independent. 
\end{proof}

\setcounter{algorithm}{3}
\renewcommand{\dim}[1]{\mathit{dim}(#1)}
\section{An approach for finding permutations quickly: Fusion and dimension
matching}
\label{sec:fusion-matching}

In this section we describe our approach to find a valid permutation. This is
the first step in {\it pluto-lp-dfp} after polyhedral dependences are obtained.
A permutation $\mathbb{P}$ is said to be valid if there are loop scaling and
loop shifting factors for (each dimension in $\mathbb{P}$), such that the
resulting transformation will not violate any dependences. The objective of
finding a good permutation is to enable loop tiling. Since loop
fusion/distribution decisions are also made at this stage, we would want to
model all possible fusion opportunities that enable tiling and pick one of
them. It is a part of our future work, to come up with a cost function that
decides a good fusion / distribution strategy.


\subsection{Definitions}
We first provide some definitions and properties of data structures that we use in
order to model the space of all permutations for fusion and tileability.
The central data structure that we use in modelling all possible fusion
opportunities that enable tiling is the \emph{fusion conflict graph}. 
A fusion conflict graph (FCG), $G=(V,E)$, where
the set of vertices is given by $V =
\{S_1^1,S_1^2,\dots, S_1^{\dim{S_1}}, S_2^1, \dots ,S_n^{\dim{S_n}}\}$, has a
vertex corresponding to each dimension of a statement in the program.
The vertices of the dependence graph are the statements in the
program. Therefore, for every vertex $v$ in the FCG, there exists a statement
(vertex) $S$ in the dependence graph such that $v$ corresponds to a dimension of
$S$. Hence, one can define a function $f:FCG\rightarrow DDG$, from the vertices of the fusion
conflict graph to the vertices of the dependence graph.

The edges in the FCG represents the dimensions of statements that can not be fused together and
permuted to the outermost level. That is, if there exists an edge between
$S^1_i$ and $S^2_j$, then the $i^{th}$ dimension of $S_1$ and $j^{th}$ dimension
of $S_2$ can not be fused together and permuted to the outermost level. Note
that, if the loop nest can be fully permuted, then it can be tiled as well. Hence an
edge in the fusion conflict graph encodes violation of fusion and tileability.
Once the graph is constructed, the objective is to group vertices that are not
connected by edges, without violating any dependences. Independent sets group
vertices in a graph that are not connected by edges. However, in order to
not violate any dependences, these independent sets have to be convex.
Given a fusion conflict graph, we say that an independent set $\mathcal{I}$ of
the fusion conflict
graph is \emph{convex}, if the $\mathcal{I}$ is an independent set and for each $v \in
\mathcal{I}$, the following condition holds:
\begin{equation}
    S=f(v) \wedge \forall S_1 \in \mathit{Pred}(S) \exists v_1 \in \mathcal{I}.
    f(v_1)=S.
\end{equation}
That is, if a vertex $v$ of the FCG corresponding to a statement $S$ is present
in $\mathcal{I}$, then there must be a vertex $v_1$ corresponding to every
predecessor $S_1$ of $S$ in $\mathcal{I}$. This condition is required to encode
transitive dependences across vertices (statements) in the DDG.

We obtain a convex independent set by a \emph{convex} coloring of the FCG. 
Given a fusion conflict graph, we say that the coloring of the fusion conflict
graph is \emph{convex}, if the vertices that have the same color form a convex
independent set.
Note that there can exist many convex colorings for the given FCG. We pick one
of them. Coming up with a cost model that picks a good coloring is a part of
our future work. In the rest of this section, we provide an approach that
performs a constructs an FCG and performs the convex coloring of the FCG. The
number of colors used to color the FCG is bounded by the maximum dimensionality
of the loop nest. This enforces a mapping of colors to dimensions of the loop
nest. The colors are ordered; the ordering of the colors give
the permutation for a statement from the outermost level to the innermost.

%
\subsection{Approach}
\label{ssec:permuteandFuse}
In this section, we provide an algorithm which to find a valid permutation. 
Note that in the space of all valid permutations, one might want to explore different cost
models that capture fusion strategies that maximize performance. For example, a
fusion that does not inhibit parallelism might be desired. In such a case, one
would want a cost model where the loop nests are fused only if the resulting
loop nest is parallel. Coming up with a cost model to enable optimal fusion, is
a focus of our future work. In this paper, we only present an approach that
models the of all possible loop permutations that enable fusion and tiling and
picks one of them.

We propose a two stage approach as shown in
Algorithm~\ref{algo:permuteandFuse}. The first step models the space of all
permutations that enable fusion and permutation by constructing a fusion
conflict graph. The routine \textsc{BuildFCG} in Line~\ref{line:buildFCG} of the
algorithm constructs the fusion conflict graph. The description of this routine
is given in Section~\ref{ssec:fcg-construction}. The number of colors used to
color the FCG is bounded by the maximum dimensionality of the loop nest.

The routine \textsc{ColorFCG} in
Line~\ref{line:convexColour} performs a convex coloring of the FCG. A convex
coloring of the FCG. If the loop nest can be completely fused, then the graph can
be colored with $m_S$ colors, where $m_S$ is the maximum depth of a loop nest
in the program. The vertices that obtain the same color represent the
dimensions that can be fused together and permuted to the outermost level. If
the graph is not colorable with $m_S$ colors, then we find a
subgraph of the FCG which can be colored with $m_S$ colors. If this subgraph is
a maximal subgraph of FCG that can be colored with $m_S$ colors, then
by fusing the dimensions with the same color, we obtain a maximally fused loop
nest. However, finding the maximally colored subgraph is costly. Therefore, as
a trade-off, we do not aim at finding the maximal subgraph which is colorable
with $m_S$ colors. We employ a SCC based coloring algorithm which
colors SCC by SCC for a given color. If coloring fails, we cut the DDG, update
the FCG and then continue coloring. The details of this coloring step is given
in Section~\ref{ssec:coloring}. Assuming there is a mapping from a color
to a dimension of the loop nest, the
permutation for a statement $S$ at a given level can be obtained based on the
colors of the vertices in the FCG that correspond to $S$.

\begin{algorithm}
    \caption{\textsc{PermuteAndFuse(P, G)}}
    \label{algo:permuteandFuse}
    \begin{algorithmic}[1]
    \REQUIRE {Program $P$ and DDG $G$ of $P$}
    \ENSURE {A valid permutation $T$ for each statement in $P$}
        \STATE F $\gets$\textsc{BuildFCG(G)}\label{line:buildFCG}
        \STATE maxColours $\gets$ Maximum dimsionality of a loop nest in $P$
        \STATE \textsc{ColorFCG(F,G,maxColours)}\label{line:convexColour}
    \end{algorithmic}
\end{algorithm}
Algorithm~\ref{algo:permuteandFuse} is sound, that is, the
permutation found for fusion and tileability is valid and does not violate any
dependences. The algorithm is also complete in the sense that all possible
permutations that enable fusion and tiling are present in the space that we
model. Therefore any valid fusion can be chosen by using an appropriate cost
model.

\subsection{Construction of the fusion conflict graph}
\label{ssec:fcg-construction}
In this section, we describe the construction of the fusion conflict graph. 
Recall that, each dimension of a statement in the program has a corresponding
vertex in the
fusion conflict graph. An edge in the fusion conflict graph represents the
dimensions that can not be fused together and permuted to the outermost level.

\begin{algorithm}
    \caption{\textsc{BuildFCG(DDG G)}}
\label{algo:buildFCG}
\begin{algorithmic}[1]
    \REQUIRE {Dependence Graph $G\langle G_v,G_e\rangle$ }
    \ENSURE {Fusion Conflict Graph $F\langle F_v,F_e\rangle$ }
    \FORALL {$S \in G_v$} \label{line:begin-permute-edges}
        \STATE  $\psi \gets$ All intra statement dep constraints for $S$
        \FORALL {$i \in 1 \dots m_S$}
            \IF {$(c^i_S \geq 1) \wedge \psi $ is infeasible}
                \STATE $F_e \gets F_e \cup \{c^i_S \rightarrow c^i_S\}$\label{line:end-permute-edges}
            \ENDIF
        \ENDFOR
    \ENDFOR 

    \FORALL {pair of statements $(S_s, S_t)$ such that $i>j$} \label{line:begin-pairwise-edges}
        \STATE $\psi \gets$ Dep constraints for all deps between $S_s$ and $S_t$
        \FORALL {$i \in 1 \dots  m_{S_s}$}
            \FORALL {$j \in 1 \dots  m_{S_s}$}
                \IF {$(c^i_{S_s} \geq 1 \wedge c^j_{S_t} \geq 1) \wedge \psi$ is infeasible}
                    \STATE $F_e \gets F_e \cup \{c^i_{S_s} \gets c^j_{S_t}\}$
                \ENDIF
            \ENDFOR
        \ENDFOR
    \ENDFOR \label{line:end-pairwise-edges}

    \FORALL {$S \in G_v$}
        \FORALL {$i \in 1 \dots m_S$}
            \STATE $F_e \gets F_e \cup \{c^i_S \rightarrow c^j_S| i\neq j \wedge 1
\leq j \leq m_S\}$\label{line:intraStmtEdge}
        \ENDFOR
    \ENDFOR

    \RETURN $G$
\end{algorithmic}
\end{algorithm}

Algorithm~\ref{algo:buildFCG} builds a fusion conflict graph. It incrementally
adds edges between vertices of the FCG by analyzing dependences between every
pair of statements in the program. The edges are added in two stages: (a) the
first stage adds intra statement permute preventing edges (b) the second
stage that adds inter statement permute and fusion preventing edges.

\paragraph{Adding intra statement edges:}
Given a DDG, the algorithm for each statement $S$, collects all intra statement
dependences as shown in Equation~\ref{eqn:intraStmtDeps}:
\begin{align}
\label{eqn:intraStmtDeps}
D_S \equiv \bigwedge \limits_{e \in DDG} (D_e| \mathit{Src(e) = Dest(e) = S}).
\end{align}
If a dimension $i$ is not permutable to the outermost level, then it
violates at least one of the intra statement dependences when permuted. To find
if $i$ is permutable to the outermost level, we set the lower bound of the
corresponding coefficient, $c^i_S$, to 1. Other coefficients corresponding to
the statement $S$ except $c^0_S$, which corresponds to the shift are set to
zero. Note that all variable corresponding to the transformation coefficients
are not constrained to be integers. 
We then solve pluto-lp for the dependence polyhedron $D_S$. 
If these constraints are unsatisfiable, then the dimension $i$ is not
permutable. Therefore we add a self edge on the vertex $c^i_S$ in the FCG
(Lines~\ref{line:begin-permute-edges}-\ref{line:end-permute-edges}).
Adding a self edge prevents coloring the vertex $i$ indicating that the
dimension can not be permuted. The self edge will removed only when the permute
preventing dependence(s) are satisfied at some outer level.

\paragraph{Adding inter statement edges}
Algorithm ~\ref{algo:buildFCG} adds inter statement permute and fusion
preventing edges in Lines~\ref{line:begin-pairwise-edges}-\ref{line:end-pairwise-edges}. For each pair of
statements $S_s$ and $S_t$ that are connected in the DDG, it collects all
dependence constraints ( both intra and inter statement dependences) between
them as in Equation~\ref{eqn:pairwiseDep}: 
\begin{align}
    \label{eqn:pairwiseDep}
    D \equiv \bigwedge\limits_{e\in DDG} (D_e | \mathit{Src(e),Dest(e)\in
\{S^s,S^t\}}).
\end{align}

The algorithm adds an edge between $S^s_i$ and $S^t_j$ if fusing and
permuting the $i$ and $j$ dimensions of $S^s$ and $S^t$ does not violate any
dependence between $S^s$ and $S^t$. 
Let $c^s_i$ and $c^t_j$ be the coefficients corresponding to $S^s_i$ and $S^t_j$
respectively. After adding the constraints, $c^s_i \geq 1$ and $c^s_j\geq 1$ and
setting all the transformation coefficients corresponding to other dimensions to
zero, if Pluto's LP formulation with $D$ as the dependence polyhedron is
unsatisfiable, then fusing $i^{th}$ dimension of $S^s$ with $j^{th}$ dimension
of $S^t$ will violate a dependence. Hence an edge is added between $S^s_i$ and
$S^t_j$.

Note that during the addition of edges, Algorithm~\ref{algo:buildFCG} uses the
relaxed LP formulation, for a pair of statements. This is a
an application of {\it pluto-lp} on a pair of statements. 
Finally, the algorithm adds edges between the vertices of the FCG that
correspond to the same statement (line~\ref{line:intraStmtEdge}). This ensures that we do
not assign the same color to two dimensions of a statement.

%
%
%

\subsubsection{Conflicting Shifts}
A pairwise loop shifting transformation for a particular dimension $i$ is said
to be \emph{conflicting} for a set of statements $\mathbb{S}$ if there are
valid shifts for the $i^{\mathit{th}}$ dimension for each pair of statements in
$\mathbb{S}$ that allow pairwise fusion and permutability but there does not
exist a valid shift for fusion and permutation for all the statements in
$\mathbb{S}$. In Theorem~\ref{thm:scale-conflicts} we formally prove that a conflicting loop
shifting transformation does not exist.

\begin{theorem}
    \label{thm:scale-conflicts}
    Given a program $P$, the DDG $G$ of $P$, there does not exist a set of
    statements $\mathbb{S}$, such that there is a conflicting loop shifting
    transformation for statements in $\mathbb{S}$. 
\end{theorem}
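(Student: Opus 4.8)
The plan is to reduce the statement to a question about a system of difference constraints over the per‑statement shift variables, and then to rule out the only obstruction to such a system being feasible.

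First I would fix, exactly as in \textsc{BuildFCG}, the linear part of the candidate outermost hyperplane of every statement in $\mathbb{S}$: one iterator coefficient per statement is pinned to $1$ and the remaining iterator coefficients to $0$, so that the only free variables left are the shifts $c^0_S$ (one per $S \in \mathbb{S}$) together with the bounding variables $\vec{u}, w$. For every dependence $e$ whose endpoints both lie in $\mathbb{S}$, the tiling-validity constraint $\phi_{S_j}(\vec{t}) - \phi_{S_i}(\vec{s}) \geq 0$ is affine in $\vec{s}, \vec{t}, c^0_{S_i}, c^0_{S_j}$; applying the affine form of Farkas' lemma over the dependence polyhedron $D_e$ (the same move that produces \textit{pluto-lp}) eliminates $\vec{s}$ and $\vec{t}$ and leaves a constraint whose only dependence on the shifts is through the single difference $c^0_{S_j} - c^0_{S_i}$. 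Hence each such $e$ contributes a constraint $c^0_{S_j} - c^0_{S_i} \geq \gamma_e$ (a parametric lower bound reduces, over the parameter domain, either to a constant one or to infeasibility of the pair, which we may exclude since then $\mathbb{S}$ is not a counterexample). The dependence-bounding constraints never obstruct feasibility, since $\vec{u},w$ may be taken arbitrarily large; Theorem~\ref{thm:scaling-appendix} makes this precise on the solution side. Therefore the existence of a valid shift for fusion and permutation of all of $\mathbb{S}$ at this level is \emph{equivalent} to the feasibility of a system of difference constraints $\{\, c^0_{S'} - c^0_{S} \geq \gamma_{S,S'} \,\}$ over $\{c^0_S\}_{S\in\mathbb{S}}$, where $\gamma_{S,S'}$ is the maximum of the $\gamma_e$ over dependences from $S$ to $S'$.

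Next I would invoke the classical fact that such a system is feasible if and only if its constraint digraph $H$ (an arc $S \to S'$ of weight $\gamma_{S,S'}$ for each constrained ordered pair) has no directed cycle of strictly positive total weight, with an explicit feasible solution given, when this holds, by single-source shortest-path distances from an auxiliary zero-weight source. Pairwise fusability of $\mathbb{S}$ is precisely the statement that every length-two cycle $S \to S' \to S$ of $H$ has non-positive weight, i.e.\ $\gamma_{S,S'} + \gamma_{S',S} \leq 0$. So the theorem reduces to: if every $2$-cycle of $H$ is non-positive then every cycle of $H$ is non-positive. I would argue by contradiction: take a directed cycle $S_{a_0} \to S_{a_1} \to \cdots \to S_{a_m} = S_{a_0}$ of positive weight, realized by a chain of dependences $e_0,\dots,e_{m-1}$ with $\gamma_{a_\ell,a_{\ell+1}}$ attained on $D_{e_\ell}$, and compose the dependence polyhedra $D_{e_0}\circ D_{e_1}\circ\cdots\circ D_{e_{m-1}}$ to obtain a dependence of $S_{a_0}$ on a later instance of itself; its non-emptiness uses that all intermediate statements lie in $\mathbb{S}$ and that $\mathbb{S}$ is predecessor-closed (the convexity condition), so the chain connects up. Reading the level-$i$ coordinate along the composite shows its accumulated displacement equals $-\sum_\ell \gamma_{a_\ell,a_{\ell+1}} < 0$, while legality of the original schedule together with the pairwise constraints on the ``return'' pairs of the cycle forces that displacement to be $\geq 0$ — the contradiction. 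The shortest-path construction then yields the desired global shift, so no conflicting shifting transformation exists.

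\textbf{Main obstacle.} The real difficulty is that last step: upgrading pairwise feasibility (a statement about $2$-cycles) to non-positivity of an arbitrarily long cycle. Composing the dependences and tracking their level-$i$ displacements is routine, but in general a chain of legal dependences around a cycle need only have \emph{lexicographically} non-negative accumulated displacement, whose $i$-th component may be negative, so legality of the original program does not by itself close the argument. The proof therefore has to use the pairwise hypothesis globally — either by passing to the transitive closure of the dependence relation inside $\mathbb{S}$, so that every ``shortcut'' of the cycle is itself a dependence whose $2$-cycle is non-positive, and then telescoping; or by pairing the composed self-dependence with a reverse chain and exhibiting a violated pairwise constraint. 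Making this telescoping airtight, and verifying that the transitive-closure step does not introduce spurious parametric $\gamma$'s (which would have made some pair non-fusable in the first place), is where the substantive work lies; everything else is bookkeeping on top of Farkas' lemma and the shortest-path characterization of difference constraints.
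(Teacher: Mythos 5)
Your reduction of the statement to the feasibility of a system of difference constraints $c^0_{S'}-c^0_{S}\ge\gamma_{S,S'}$, followed by the negative-cycle characterization, is a genuinely different and more formal framing than the paper's. The paper instead argues by a case split on the SCC structure of the DDG: if the statements of $\mathbb{S}$ lie in distinct SCCs the dependences among them are acyclic, so shifts can be accumulated incrementally along a topological order (your shortest-path solution, in the special case where the constraint digraph has no cycles at all); if they lie in one SCC it argues either that dimension $i$ does not carry the backward dependences (reducing to the acyclic case) or that it does carry them, in which case shifting the target of such a dependence cannot violate it. However, your proposal does not close, and you explicitly concede you have not closed, the one step that carries the entire content of the theorem: that non-positivity of $2$-cycles implies non-positivity of all cycles in the constraint digraph $H$. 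As a statement about weighted digraphs this is simply false. Take three statements with dependences $S_1\to S_2\to S_3\to S_1$ each forcing $\gamma=1$ and no reverse dependences: $H$ has no $2$-cycles, every pair is individually shiftable (a single difference constraint is always feasible), yet the $3$-cycle has weight $3>0$ and the triple admits no global shift. So the implication cannot follow from the pairwise hypothesis alone; it must come from program semantics --- legality of the composed self-dependence, the fact that such a cycle forces the statements into one SCC, and what ``carrying'' the back edge at level $i$ entails --- which is exactly the territory your ``main obstacle'' paragraph gestures at without resolving. A proposal whose key lemma is both unproven and false at the stated level of generality is not yet a proof.

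That said, the direction is promising and arguably sharper than the paper's own (quite informal) treatment of the SCC cases. To complete it you would need to show that a positive-weight cycle in $H$ yields, by composition of the dependence polyhedra around the cycle, a non-empty self-dependence of some statement whose level-$i$ displacement is forced to be negative independently of any shift (shifts cancel around a cycle), and then argue that this very situation is already excluded --- either because it makes dimension $i$ non-permutable for that statement (so a self-edge would exist in the FCG and the premise of pairwise fusability at level $i$ fails), or because the dependence composed around the cycle must itself appear, via transitivity, as a direct pairwise obstruction between some pair in $\mathbb{S}$. Until one of those routes is made airtight, the difference-constraint machinery is only a clean restatement of the problem, not a solution to it.
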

Before presenting the proof of the Theorem~\ref{thm:scale-conflicts}, we make
the following observation about a loop shifting transformation. Consider a
dependence $S_s \rightarrow S_t$ in $P$.  Let a loop shifting transformation
$[i]\rightarrow[i+1]$ be applied for $S_t$.  This transformation will execute
the $i^\mathit{th}$ iteration in the original space at the $i+1^\mathit{th}$
iteration. That is, the execution is delayed by a factor of $1$ in the
transformed space. This leads to an observation that, any positive loop shifting
transformation at the target of a dependence will not violate the dependence.
    \begin{proof}
        We prove Theorem~\ref{thm:scale-conflicts} by contradiction. We prove split the proof into three cases.
        \begin{enumerate}
            \item \label{proof:diff-scc}
                All the statements considered are from different SCCs. We
                assume that the statements are pairwise fusable and permutable
                for the dimension $i$.  Since the statements are in different
                SCCs, and pairwise fusion of these statements is valid,
                possibly with a loop shift, all the statements can be
                incrementally fused together by increasing the shifting factors
                of the targets of dependences. Therefore, there will exist a
                shifting factor, which will allow us to fuse the
                $i^\mathit{th}$ dimension of all the statements considered.
                This large positive number will be the shift required to fuse
                all the statements and can be obtained by a call to pluto-lp.
                This is guaranteed to have a solution in this case, and the
                resulting  rational solution can be scaled to an integer.
                Thus there can not exist a conflicting which will
                prevent the statements in different SCCs to be fused.

            \item Suppose the statements are in the same SCC, and the
                $i^\mathit{th}$ dimension does not carry the backward
                dependences. In this case, any loop shifting transformation to
                the $i^\mathit{th}$ will not violate a backward dependence.
                Then, for the $i^\mathit{th}$ dimension, all statements can be
                seen as target of the forward dependences. This is similar to
                case~\ref{proof:diff-scc}.

            \item Let the dimension $i$ is the source of some dependences and the
                target of few others. That is, the dimension $i$ carries both
                loop independent and loop carried dependences. Since the statement is in an SCC, there
                exists a dimension which carries the dependence. Without loss
                of generality let us assume that dimension $i$  carries these 
                dependences. These dependences will be satisfied by the dimension $i$
                and any loop shifting transformation will not violate the
                dependences. Therefore, there exists a shifting factors in the original
                schedule which will now violate any of these dependences and
                fuse all the statements within this SCC. Therefore the case of conflicting
                shifts do not arise.
           \end{enumerate}
    \end{proof}

   A similar argument can be provided for the absence of conflicting loop
   scaling factors. 

\subsection{Colouring the FCG}
\label{ssec:coloring}
The routine provided in Algorithm~\ref{algo:buildFCG} builds the fusion
conflict graph. In this section, we provide a routine that implements a convex
coloring of the FCG. Coloring is driven by the topological ordering on the
SCCs of the DDG.

Algorithm~\ref{algo:colorSccBased} colors the vertices of the FCG, one color
at a time. Coloring starts from the first SCC in the topological order of the
SCCs. Given an SCC, there exists at least one dimension which fuses all the
vertices of the SCC without violating any dependences. This observation has
formally been stated in Theorem~\ref{thm:scc_coloring}.
\begin{theorem}
    \label{thm:scc_coloring}
    Given a DDG $G$ of a program P, and a FCG $F$ of $P$, then for each SCC in
    G, there exists at least one dimension corresponding to vertices of the SCC,
    such that the vertices corresponding to these dimensions in $F$ can be given the
    same color.
\end{theorem}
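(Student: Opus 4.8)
The plan is to prove the statement constructively: for an arbitrary SCC $C$ of $G$, I would single out one dimension index --- namely $1$, the outermost loop --- and show that the vertex set $\{S^{1} \mid S \in C\}$ is a \emph{convex independent set} of $F$, which is exactly the condition for all its vertices to be legally given one color. The convexity half is immediate and I would dispatch it first: since $C$ is strongly connected, every statement of $C$ is a transitive predecessor of every other, so a color class meeting $C$ is convex on $C$ precisely when it contains exactly one vertex per statement of $C$ --- and $\{S^{1} \mid S \in C\}$ does. It then remains to show this set contains no edge of $F$: no self-edge, no inter-statement edge, and no intra-statement edge (the last vacuous because we pick a single dimension per statement, and \textsc{BuildFCG} only adds intra-statement edges between \emph{distinct} dimensions of the same statement).

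The fact on which the whole argument rests is a structural one: in a structured program the outermost loop is common to every statement of an SCC, i.e.\ it is dimension $1$ of each of them. I would prove this on the syntax tree. The least common ancestor of the statements of $C$ must be a loop: if it were a sequence, the statements of $C$ would be spread over at least two of its sibling sub-blocks in program order, and no dependence can run from a statement in a later sub-block to one in an earlier sub-block (every instance of the former executes after every instance of the latter), so the sub-block indices would be non-decreasing around a dependence cycle yet not all equal --- a contradiction. Since ``enclosed in loop $L$'' forces ``enclosed in every loop surrounding $L$'', this common loop is then the outermost loop of each $S \in C$, hence their common dimension $1$.

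With this in hand the two edge types are discharged purely by feasibility of {\it pluto-lp}. No self-edge on $S^{1}$: dimension $1$ is already the outermost dimension so the permutation is the identity, and every intra-statement dependence of $S$ is lexicographically positive, hence has a nonnegative first component; therefore $c^{1}_{S}=1$ with all other coefficients and the shift set to $0$ already satisfies the tiling-validity constraints~(\ref{eqn:tile-validity}) for $D_{S}$, the bounding constraints~(\ref{eqn:dep-bound}) are met by taking $\vec{u}$ and $w$ large enough, and $\sum_i c_i \ge 1$ holds --- so {\it pluto-lp} is feasible and \textsc{BuildFCG} adds no self-edge. No inter-statement edge between $S^{1}$ and $T^{1}$ for $S,T \in C$: loop $1$ is shared and the original program is legal, so for every dependence between $S$ and $T$ (in either direction) the expression in~(\ref{eqn:tile-validity}), under the same ``zero everything but $c^{1}$'' assignment, reduces to the difference of the two statements' coordinates along loop $1$, which is nonnegative; hence the pairwise system is feasible and no edge is added. (Permitting nonzero shifts in the pairwise tests only makes feasibility easier, and in that setting Theorem~\ref{thm:scale-conflicts} is exactly what lifts pairwise feasibility to all of $C$ at once; the zero-shift argument above sidesteps even that.) This yields the claimed convex independent set, so its vertices can all receive the same color.

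The main effort --- and the likely sticking point --- is the structural lemma and the \emph{scope} of the theorem. The syntax-tree argument needs the input to be a structured static control part, and one must phrase ``sharing a loop'' at the source level so that it propagates up the enclosing chain. A second, more delicate point is whether the statement is needed only for the initial FCG $F$ of $P$ --- which the argument above covers completely --- or also when \textsc{ColorFCG} revisits $C$ for a later color with some outer dimensions already colored; there ``dimension $1$'' may be used up, so one must instead take the outermost not-yet-colored loop common to the \emph{residual} (uncarried) dependences of $C$ and rerun the same reasoning on those dependence polyhedra together with the linear-independence constraint~(\ref{eqn:lin-ind-real}), relying on the fact (Theorem~\ref{thm:scaling-appendix}) that integer scaling preserves both validity and linear independence. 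I would therefore organize the proof as (i) the convexity reduction, (ii) the common-outermost-loop lemma, (iii) the two feasibility checks, and, if needed, (iv) the inductive extension over colors.
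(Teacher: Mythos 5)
Your proof is correct and rests on the same central idea as the paper's, but it is considerably more explicit and, in one respect, takes a different route. Both arguments reduce the theorem to the observation that the statements of an SCC share a common enclosing loop whose dimension passes every feasibility test in \textsc{BuildFCG}, so that no edge of $F$ is incident on the corresponding vertices. The paper's proof is by contradiction and simply asserts that ``there must exist a loop which carries a dependence along the back edge'' and that pluto-lp is feasible for that dimension; it identifies the common loop as the one carrying the cycle-closing dependence. You instead pick the outermost loop (dimension $1$), prove that it is common to the whole SCC by a syntax-tree argument, and then verify feasibility of the intra-statement and pairwise LPs directly from legality of the original program (the component of every dependence distance along a shared outermost loop is nonnegative). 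You also check the convexity condition on the color class and the absence of intra-statement edges, both of which the paper's proof silently skips, and you correctly flag that the theorem as used in \textsc{ColorFCG} must also hold for inner colors on residual dependences --- a scope issue the paper does not acknowledge. Your version is the more defensible proof.

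One small repair is needed in your structural lemma. The claim that a sequence-node LCA yields non-decreasing sub-block indices around a dependence cycle is only valid when that sequence is not itself nested inside a loop; an enclosing loop can carry a dependence from a later sub-block back to an earlier one. This does not threaten your conclusion --- in that case the enclosing loop is already a common ancestor loop, so dimension $1$ is still shared --- but the argument should be run top-down from the root of the syntax tree (where no enclosing loop exists and the monotonicity claim genuinely holds) rather than at the LCA. The degenerate case of a singleton SCC whose statement sits outside any loop ($m_S=0$) contributes no vertices to $F$ and should be excluded explicitly; the paper's proof ignores it as well.
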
 
\begin{proof}
    We prove Theorem~\ref{thm:scc_coloring} by contradiction. Let us assume
    that there existed no dimensions for every vertex in the SCC which can be
    given the same color. This means to say that there exists no
    dimensions that can be fused together and permuted to the outermost level.
    However, in the input program, since the vertices are part of an SCC, there
    must exist a loop, which carries a dependence along the back edge
    (Otherwise, these statements would not have been in an SCC).
    Therefore, when the pair of statements are analyzed using
    pluto-lp, at least one dimension corresponding to the loop that fuses all
    the statements will have a solution to pluto-lp. Therefore there will not be
    any edges added corresponding to the dimensions that carry the dependences for all the
    statements in the FCG. Therefore, all the vertices corresponding to the
    at least one dimension can be given the same color.
\end{proof}

In the outermost level, the coloring of
the first SCC will succeed. The coloring of a subsequent SCCs might fail, if
fusing it with one of the SCCs that has already been colored violates a
dependence. Note that in such a case there will be an edge between the
vertices of the FCG corresponding to the statements of two SCCs that can not be
fused. As soon as coloring fails, the algorithm cuts between the two SCCs and
updates the DDG. FCG is also updated to remove edges corresponding to the
dependences that have already been satisfied. Note that at the inner levels,
the coloring of the first SCC might fail because of a permute preventing
dependence. This dependence must be satisfied at some outer level. In such
cases, we update the DDG by removing the edges corresponding to the dependences
that have been satisfied at the outer levels. We also rebuild the FCG in order
to remove edges that correspond to the dependences that have been satisfied at
outer levels.

%
\begin{algorithm}[ht]
    \caption{\textsc{colorFCG(FCG, DDG, maxColors)}}
    \label{algo:colorSccBased}
    \begin{algorithmic}[1]
        \FORALL {$c \in 1 \dots \mathit{maxColors}$} \label{line:loop-color}
        \FORALL{$i = 1$ to $\card{SCCs(DDG)}$ } \label{line:traverse-scc}
    \IF{$\neg$\textsc{colorScc}(i,c,FCG)}\label{line:colorScc}
        \IF{$i==1$}\label{line:permuteDep}
                        \STATE Update DDG by removed deps satisfied at outer levels
                \ELSE
                        \STATE \textsc{cutBetweenSCCs}(i, i-1, DDG)\label{line:fusionDep}
                        \STATE Update DDG by removing dependences satisfied by the cut and
                               other outer levels.
                \ENDIF
                \STATE \textsc{BuildFCG(DDG)}\label{line:rebuildFCG}
            \STATE \textsc{color}(i,c,DDG)\label{line:re-colorSCC}
        \ENDIF
        \ENDFOR
        \FORALL {$j=1 \dots \card{S}$}
        \STATE Update the $T_{S_j}$ at level $c$ for based on the vertices of the FCG that have been colored
             with $c$ \label{line:update-permutation}
        \ENDFOR
        \ENDFOR
    \end{algorithmic}
\end{algorithm}

Algorithm~\ref{algo:colorSccBased} assumes that all there is an ordering on the
colors and the SCC of the program. For a given color $c$, algorithm picks SCCs in
the topological order (line~\ref{line:traverse-scc}). It tries to color the
vertices of the FCG corresponding to the statements in the current SCC $i$.
This is accomplished by the routine \textsc{colorSCC}
(line~\ref{line:colorScc}). This routine returns true if the coloring succeeds
for the SCC $i$. Else it returns false. When the coloring fails, the cases are
handled in Lines~\ref{line:permuteDep} and~\ref{line:fusionDep}. The DDG is
either directly updated or updated after the cut, by removing dependences that
are satisfied by the outer levels and the cut, based on the level at which
coloring fails. FCG is rebuilt once the
dependencies are removed (line~\ref{line:rebuildFCG}) and the current SCC $i$ is
colored again with color $c$ (line~\ref{line:re-colorSCC}). Finally the
permutation at the level $c$ is updated for all the statements in the program
(line~\ref{line:update-permutation}).

\subsection{Illustration:}
Consider the example shown in Figure~\ref{fig:example-interchange}.
Note that the statements can not be fused directly because of the RAW
dependence
from statement S1 to S2. But if the dimensions of $S_1$ are
interchanged, which corresponds to loop interchange, then $S_1, S_2$ and $S_3$ can be
fused completely. 
The FCG corresponding to the program is shown in the right. There are six vertices
each corresponding to a dimension of a statement of the loop nest. There are no
intra statement dependences and hence no permute preventing edges are added to
the FCG in Lines~\ref{line:begin-permute-edges}-~\ref{line:end-permute-edges}.
Thick lines in the figure represent the edges added by the intra statement
dependences. Edges added between the dimensions of a
statement are shown as dashed lines. 

\begin{figure}[ht]
    \centering{
        \begin{minipage}{0.55\linewidth}
        \begin{lstlisting}[
            basicstyle =\footnotesize
        ]
for(i=0; i<N; i++)
  for(j=0;j<N;j++)
    A[i][j] = i+j; //S1
for(i=0; i<N; i++)
  for(j=0;j<N;j++)
    B[i][j] = A[j][i]; //S2
for(i=0; i<N; i++)
  for(j=0;j<N;j++)
    C[i][j] = A[i][j]; //S3
    \end{lstlisting}
    \end{minipage}\hfill
    \begin{minipage}{0.4\linewidth}
         \begin{tikzpicture}[align=center, node distance=3em]
\node[draw, circle, fill=red] (A) at (0,0) {};
\node[draw, circle, fill=green] (B) [right of = A] {};
\node[draw, circle, fill=green] (C) [below of = A] {};
\node[draw, circle, fill=red] (D) [below of = B] {};
\node[draw, circle, fill=red] (E) [below of = C] {};
\node[draw, circle, fill=green] (F) [below of = D] {};

\node[draw, align=left] (T1) [left of=A] {$S_1$};
\node[draw, align=left] (T2) [left of=C] {$S_2$};
\node[draw, align=left] (T3) [left of=E] {$S_3$};

\node[draw, align=left] (T4) [above of=A] {$i$};
\node[draw, align=left] (T5) [above of=B] {$j$};

\draw  (A)--(C);
\draw[dashed]  (A)--(B);
\draw  (B)--(D);
\draw[dashed]  (C)--(D);
\draw  (C)--(E);
\draw  (D)--(F);
\draw [dashed] (E)--(F);
\end{tikzpicture}
    \end{minipage}
}
\caption{Illustration of loop interchange to enable fusion}
\label{fig:example-interchange}
\end{figure}
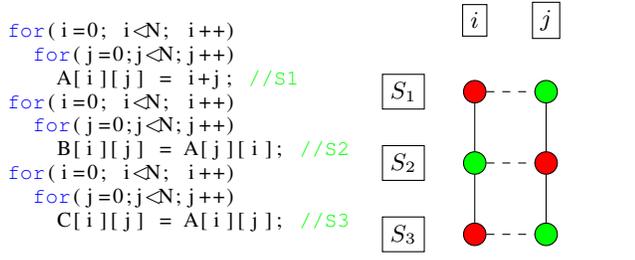
The coloring done by the Algorithm~\ref{algo:colorSccBased} is shown in the
Figure~\ref{fig:example-interchange}. 
The vertices in red form the outer most level and the ones in
green form the inner level. This corresponds to identity transformation for
statements $S_1$ and $S_3$ and a loop interchange for the statement $S_2$ thus
resulting in a fully fused loop nest.

\subsection{Correctness}
The correctness of the approach of finding a valid transformation depends on
the transitivity of fusion of dimensions while satisfying the tileability
criterion. The fusion conflict graph is constructed by analyzing pairwise and
hence in the following theorem, we formally prove that dimension wise fusion and
permutability is transitive.
\begin{theorem}
    \label{thm:transitivity}
    If a dimension $i$ of statement $S_1$ can be fused together with a
    dimension $j$ of a statement $S_2$ and dimension $j$ of $S_2$ can be fused
    and permuted with a dimension $k$ of statement $S_3$ then dimensions $i,j$
    and $k$ of statements $S_1,S_2$ and $S_3$ can be fused together and
    permuted provided there are no loop skewing transformations.
\end{theorem}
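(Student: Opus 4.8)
The plan is to reduce the three-way fusion question to the pairwise fusion facts we are given, by combining the shifting/scaling factors that witness each pairwise fusion into a single consistent choice of factors for all three statements. First I would recall what the hypotheses actually give us: ``dimension $i$ of $S_1$ can be fused with dimension $j$ of $S_2$'' means (by the construction in Algorithm~\ref{algo:buildFCG}) that there is \emph{no} FCG edge between $S_1^i$ and $S_2^j$, i.e.\ the constraint system $\psi$ over the pairwise dependences between $S_1$ and $S_2$ with $c^i_{S_1}\geq 1$, $c^j_{S_2}\geq 1$ and all other coefficients (except the shift $c^0$) set to zero is \emph{feasible} in \emph{pluto-lp}. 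So there exist rational scaling coefficients $\alpha_1,\alpha_2\geq 1$ and shifts $\beta_1,\beta_2$ realizing the fusion of $(S_1^i,S_2^j)$, and similarly rational $\alpha_2',\alpha_3\geq 1$, $\beta_2',\beta_3$ realizing the fusion of $(S_2^j,S_3^k)$. The goal is to produce a single triple of scaled hyperplanes on $S_1,S_2,S_3$ (one nonzero coefficient each, at dimensions $i,j,k$ respectively) that simultaneously satisfies all pairwise dependence constraints among the three statements.

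The key step is to reconcile the two (possibly different) choices of factors on the shared statement $S_2$. Here I would invoke Theorem~\ref{thm:scaling-appendix}: scaling any valid \emph{pluto-lp} solution by a constant $k\geq 1$ keeps it valid, and in particular the tiling-validity inequalities and the bounding inequalities are preserved. So I would take $k = \mathrm{lcm}$-style common multiple (or just the product) of the denominators / ratios needed to make $\alpha_2$ and $\alpha_2'$ agree, scale the first pair's solution by $\alpha_2'$ and the second pair's by $\alpha_2$ (and analogously line up the shifts, using the observation stated just before Theorem~\ref{thm:scale-conflicts} that a positive shift at the target of a dependence never violates it — so shifts can always be increased to a common value). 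After this rescaling, statement $S_2$ carries the \emph{same} hyperplane in both pairs, statement $S_1$ carries a consistently scaled hyperplane, $S_3$ likewise, and each pairwise dependence set ($S_1$–$S_2$, $S_2$–$S_3$, and $S_1$–$S_3$) is satisfied: the first two by construction after rescaling, and the $S_1$–$S_3$ pair because any dependence between $S_1$ and $S_3$ factors (transitively) through the DDG and is carried by the composed schedule — this is exactly the point the convexity condition on independent sets is designed to capture. I would then remark that this composite assignment is precisely a witness that no FCG edge is forced among $S_1^i,S_2^j,S_3^k$, which is the conclusion; and I would note the ``no loop skewing'' hypothesis is what guarantees each statement's hyperplane stays single-nonzero-coefficient so that the pairwise analysis in \textsc{BuildFCG} is the whole story.

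The main obstacle I anticipate is the $S_1$–$S_3$ pairwise dependences: the hypotheses only talk about $S_1$–$S_2$ and $S_2$–$S_3$, so I must argue that fusing along those two edges automatically respects any direct dependence edge between $S_1$ and $S_3$. I expect to handle this the same way Theorem~\ref{thm:scc_coloring} and the convexity definition do — by appealing to the DDG structure: a direct $S_1\to S_3$ dependence, together with the fused outer-level hyperplanes chosen for $S_1$ and $S_2$ and the (transitively reachable) $S_2\to S_3$ path, means the chosen hyperplane on $S_3$ can only need a \emph{larger} positive shift than already forced by $S_2\to S_3$, which by the pre-Theorem~\ref{thm:scale-conflicts} observation is always available without violating anything. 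A secondary nuisance is purely bookkeeping: making sure the common rescaling factor $k\geq 1$ (so Theorem~\ref{thm:scaling-appendix} applies) and doesn't accidentally drive some coefficient below its lower bound of $1$; since all factors are $\geq 1$ and we only ever multiply, this is immediate, but it should be stated. I would close by observing that an identical argument, with scaling factors in place of shifts, gives the corresponding statement for loop scaling, mirroring the remark after Theorem~\ref{thm:scale-conflicts}.
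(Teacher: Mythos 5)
Your argument is sound at the same level of informality as the paper's, but it takes a genuinely different route. The paper proves the theorem by contradiction: it supposes some dependence $d$ is violated by the three-way fusion, rules out $d$ being a direct $S_1$--$S_3$ dependence (on the grounds that such a violation would have produced an FCG edge between $S_1^i$ and $S_3^k$), and dismisses transitive dependences by appealing to the two pairwise fusabilities. You instead argue constructively: you take the two pairwise \emph{pluto-lp} witnesses from Algorithm~\ref{algo:buildFCG}, use Theorem~\ref{thm:scaling-appendix} to rescale each so that the shared statement $S_2$ carries the same coefficient in both, reconcile the shifts via the observation preceding Theorem~\ref{thm:scale-conflicts}, and exhibit the combined assignment as a single valid solution. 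Your route makes explicit something the paper's proof never mentions --- that the two pairwise witnesses may disagree on $S_2$'s scaling and shift and must be reconciled --- and it ties the theorem cleanly to the scaling and no-conflicting-shift machinery. Two caveats. First, when you raise $S_2$'s shift to a common value, recall that $S_2$ is a \emph{source} of the $S_2\to S_3$ dependences, so increasing its shift alone can decrease $\phi_{S_3}(\vec{t})-\phi_{S_2}(\vec{s})$ and violate validity; the increase must be propagated to $S_3$, and in the cyclic case this is precisely the content of Theorem~\ref{thm:scale-conflicts}, which you should invoke rather than only the remark before it. Second, your treatment of a \emph{direct} $S_1\to S_3$ dependence is the weakest step: such a dependence is its own DDG edge and does not ``factor through'' the $S_1\to S_2\to S_3$ path, and nothing in the two stated hypotheses guarantees that $S_1^i$ and $S_3^k$ are pairwise fusable. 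The paper has the same hole and covers it by implicitly assuming there is no FCG edge between $S_1^i$ and $S_3^k$ --- true in the context where the theorem is applied, since the coloring only groups mutually non-adjacent vertices, but not a consequence of the hypotheses as written. State that assumption explicitly rather than attempting to derive it.
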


\begin{proof}
We will prove Theorem~\ref{thm:transitivity} by contradiction. Let us assume
that the dimensions $i$ and $j$ of $S_1$ and $S_2$ can be fused together and
dimensions $j$ and $k$ of $S_2$ and $S_3$ can be fused together, but all the
three statements can not be fused. Then there exists at least one dependence
which is violated. Let this dependence be $d$. The dependence $d$ can not be a
direct dependence between $S_1$ and $S_3$ because, if this was a direct
dependence, there would have been a fusion conflict edge between dimensions $i$
and $k$ of statements $S_1$ and $S_3$ in the fusion conflict graph. This fusion
does not violate the transitive dependence between $S_1$, $S_2$ and $S_3$
because the satisfaction of relaxed-LP formulation of Pluto implies that each
of the dimensions can be fused and permuted to the outer most level. This
implies that once the loops i and j form permutable band and the loops j and k
form a permutable band. Therefore, the loops i,j and k of statements $S_1$,
$S_2$ and $S_3$ can be fused together to form a permutable band. This will not
violate any dependences between statements $S_1$, $S_2$ and $S_3$. This violates
our assumption that there is a dependence that is violated. Hence $S_1$, $S_2$
and $S_3$ can be fused together and permuted to the outermost level.
\end{proof}

\balance
\bibliographystyle{plain}
\bibliography{bibfile}
\end{document}